\newtheorem{theorem}{Theorem}[section]
\newtheorem{claim}{Claim}[section]
\newtheorem{corollary}{Corollary}[section]
\newtheorem{definition}{Definition}
\newcommand{\inv}[1]{\frac{1}{#1}}
\newcommand{\cache}[1]{\mathbb{C}_{#1}}
\newcommand{\randmark}{{\sc RandomizedMarking}}
\newcommand{\zappingoff}{\mathcal{F}}
\newcommand{\rentalpagingcilp}{{\sc RentalPagingCILP}}
\newcommand{\rentalcachingcilp}{{\sc RentalCachingCILP}}
\newcommand{\rentalcachingmeta}{{\sc RentalCachingMeta}}
\newcommand{\zappingpagingcilp}{{\sc ZappingPagingCILP}}
\newcommand{\zappingcachingcilp}{{\sc ZappingCachingCILP}}
\newcommand{\rentalzappingpagingcilp}{{\sc RentalZappingPagingCILP}}
\newcommand{\rentalzappingcachingcilp}{{\sc RentalZappingCachingCILP}}
\newcommand{\rentalzappingcachingmeta}{{\sc RentalZappingCachingMeta}}
\newcommand{\ALG}{\mathrm{ALG}}
\newcommand{\ALGSR}{\mathrm{ALG_{SR}}}
\newcommand{\ALGINF}{\mathrm{ALG_{\infty}}}
\newcommand{\ALGC}{\mathrm{ALG_{C}}}
\newcommand{\ALGZ}{\mathrm{ALG_{Z}}}
\newcommand{\OPT}{\mathrm{OPT}}
\newcommand{\OPTSR}{\mathrm{OPT_{SR}}}
\newcommand{\OPTINF}{\mathrm{OPT_{\infty}}}
\newcommand{\OPTC}{\mathrm{OPT_{C}}}
\newcommand{\OPTZ}{\mathrm{OPT_{Z}}}
\title{File Caching with Rental Cost and Zapping}
\author{Monik Khare \hspace{1cm} Neal E. Young}
\date{}
\begin{document}

\maketitle

\begin{abstract}
The \emph{file caching} problem is defined as follows. Given a cache of size $k$ (a positive integer), the goal is to minimize the total retrieval cost for the given sequence of requests to files. A file $f$ has size $size(f)$ (a positive integer) and retrieval cost $cost(f)$ (a non-negative number) for bringing the file into the cache. A \emph{miss} or \emph{fault} occurs when the requested file is not in the cache and the file has to be retrieved into the cache by paying the retrieval cost, and some other file may have to be removed (\emph{evicted}) from the cache so that the total size of the files in the cache does not exceed $k$.

We study the following variants of the online file caching problem. \textbf{\emph{Caching with Rental Cost} (or \emph{Rental Caching})}: There is a rental cost $\lambda$ (a positive number) for each file in the cache at each time unit. The goal is to minimize the sum of the retrieval costs and the rental costs. \textbf{\emph{Caching with Zapping}}: A file can be \emph{zapped} by paying a zapping cost $N \ge 1$. Once a file is zapped, all future requests of the file don't incur any cost. The goal is to minimize the sum of the retrieval costs and the zapping costs.

We study these two variants and also the variant which combines these two (rental caching with zapping). We present deterministic lower and upper bounds in the competitive-analysis framework. We study and extend the online covering algorithm from \citep{young02online} to give deterministic online algorithms. We also present randomized lower and upper bounds for some of these problems.
\end{abstract}

\section{Introduction}

\subsection{Background}

The \emph{file caching} (or \emph{generalized caching}) problem is defined as follows. Given a cache of size $k$ (a positive integer), the goal is to minimize the total retrieval cost for the given sequence of requests to files. A file $f$ has size $size(f)$ (a positive integer) and retrieval cost $cost(f)$ (a non-negative number) for bringing the file into the cache. A \emph{miss} or \emph{fault} occurs when the requested file is not in the cache and the file has to be brought into the cache by paying the retrieval cost. When a file is retrieved into the cache, some other file may have to be removed (\emph{evicted}) from the cache so that the total size of the files in the cache does not exceed $k$. Weighted caching (or weighted paging) is the special case when each file has size 1. Paging is the special case when each file has size 1 and the retrieval cost for each file is 1.

An algorithm is \emph{online} if its response for each request is independent of all future requests. Let $\ALG (\sigma)$ be the cost of an algorithm ALG on request sequence $\sigma$, and let $\OPT(\sigma)$ be the corresponding optimal offline cost. $\ALG$ is $\alpha$-competitive if, for every request sequence $\sigma$, $\ALG(\sigma) \le \alpha \cdot \OPT(\sigma) + c$, where $c$ is a constant independent of the request sequence.

In this paper, we study the following variants of the file caching problem in the online setting using the competitive-analysis framework \citep{khare2012caching}.

\begin{definition}
	\label{dfn:rentalcaching}
	\textbf{\emph{Caching with Rental Cost (or Rental Caching)}}: There is a rental cost $\lambda$ (a positive number) for each file in the cache at each time step. The goal is to minimize the sum of the retrieval costs and the rental costs. In our mode, we allow time steps with no requests.
\end{definition}

\citet{chrobak2010sigact} proposes the rental caching problem and also presents some preliminary results. \emph{Weighted rental caching} (or, \emph{weighted rental paging}) is a special case of the rental caching problem where each file has size 1. \emph{Rental paging} is a special case where each file has size 1 and the retrieval cost for each file is 1.

The rental caching problem is motivated by the idea of energy efficient caching. Caching systems can save power by turning off the memory block that are not being used to store any files. Rental Caching models this by charging a rental cost for keeping each file in the cache. See \citep{salinger2012rental} for specific applications.

In section \ref{sec:rent-inf-cache} we show that the variant of rental caching where the cache has infinite size, is closely related to the ski-rental problem. The \emph{ski-rental} problem is the following. A pair of skis can be rented by paying $\$\lambda$ per day, or can be bought for the remainder of the ski season by paying $\$B$. It is not known when the season is going to end and the goal is to minimize the total money spent for the entire season \citep{karlin1988competitive}. 

\begin{definition}
	\label{dfn:cachingzapping}
	\textbf{\emph{Caching with Zapping}}: There is an additional cache of infinite size and any file can be added to this cache by paying a cost $N$ (a positive number greater than or equal to 1) at any time step. When a file is placed into this additional cache, we say the file has been zapped. A miss or fault occurs only when the requested file is not present in either cache. Thus, any future requests to a file in the additional cache do not incur any cost. The goal is to minimize the sum of the retrieval costs and the zapping costs.
\end{definition}

\emph{Weighted caching with zapping} (or, \emph{weighted paging with zapping}) is a special case of the caching with rental cost problem where each file has size 1. \emph{Paging with zapping} is a special case where each file has size 1 and the retrieval cost for each file is 1.

These variants generalize the file caching problem. File caching is a special case of rental caching where the rental cost is $0$. Similarly, caching is a special case of caching with zapping where the cost of zapping is arbitrarily large. We also study the variant which combines these two variants: \textbf{\emph{rental caching with zapping}}. In our model, there is no rental cost for files in the additional cache. Only the files in the size $k$ primary cache have to pay the rental cost.

\subsection{Previous work}

In 1985 \citet{sleator1985amortized} introduced the competitive-analysis framework. In \citep{sleator1985amortized} they show that the well-known paging rules like {\sc LeastRecentlyUsed} (LRU), {\sc FirstInFirstOut} (FIFO), and {\sc FlushWhenFull} (FWF) are $k$-competitive and that $k$ is the best ratio any deterministic online algorithm can achieve for the paging problem.

\citet{fiat1991marking} initiate the competitive analysis of paging algorithms in the randomized setting. They show a lower bound of $H_k$, where $H_k$ is the $k_{th}$ harmonic number, for any randomized algorithm. They give a $2H_k$-competitive \randmark\ algorithm. \citet{achlioptas2000competitive} show that the tight competitive ratio of \randmark\ is $2H_k - 1$. \citet{mcgeoch1991strongly} and \citet{achlioptas2000competitive} give optimal $H_k$-competitive randomized algorithms for paging.

For weighted caching, \citet{chrobak1990newresults} give a tight $k$-competitive deterministic algorithm. For the randomized case, \citet{bansal2007primal} give a tight $O(\log{k})$-competitive primal-dual algorithm.

For file caching, \citet{irani2002page} show that the offline problem is NP-hard. For the online case, \citet{irani2002page} give results for the \emph{bit model} ($cost(f) = size(f)$ for each file $f$) and \emph{fault model} ($cost(f) = 1$ for each file $f$). She shows that LRU is $(k+1)$-competitive for both models. \citet{cao1997cost} extend the result to file caching. \citet{young02online} independently gives {\sc Landlord} algorithm and shows that it is $k$-competitive for the file caching problem. \citet{irani2002page} gives an $O(\log^2{k})$-competitive randomized algorithm for bit and fault models. \citet{bansal2007primal} give an $O(\log{k})$-competitive randomized algorithm for both the models, and an $O(\log^2{k})$-competitive randomized algorithm for the general case.

\citet{young1994kserver} uses online primal-dual analysis to give a $k$-competitive deterministic online algorithm for weighted-caching. \citet{bansal2007primal, bansal2008randomized, buchbinder2009online} use online primal-dual approach to give randomized algorithms for the paging, weighted caching, and file caching problems. In a recent work, \citet{adamaszek2012log} builds on their
online primal-dual approach to give an $O(\log{k})$-competitive for the general case. In another recent work \citet{epstein2011variants} show that this online primal-dual approach can be extended to \emph{Caching with Rejection}. Caching with rejection is a variant of file caching where a request to a file, that is not in the cache, can be declined by paying a rejection penalty. In this variant, each request is specified as a pair $(f, r)$, where $f$ is the file requested and $r$ is the rejection penalty. Note that, caching with rejection is different from caching with zapping. In caching with zapping, a file can be zapped at any time step, while in caching with rejection, a file can be rejected only at the time step when it is requested. Moreover, a rejected file can incur retrieval cost or rejection penalty again in the future, while the zapped file does not incur any cost after it is zapped.

\citet{young12greedy} present a deterministic greedy $\Delta$-approximation algorithm for any covering problem with a submodular and non-decreasing objective function, and with arbitrary constraints that are closed upwards, such that each constraint has at most $\Delta$ variables. They show that their algorithm is $\Delta$-competitive for the online version of the problem where the constraints are revealed one at a time. Many online caching and paging problems reduce to online covering, and consequently, their algorithm generalizes many classical deterministic algorithms for these problems. These include LRU and FWF for paging, {\sc Balance} and {\sc Greedy Dual} for weighted caching, {\sc Landlord} (a.k.a. {\sc Greedy Dual Size}) for file caching, and algorithms for {\sc Connection Caching} \citep{young12greedy}. We study this approach and extend it to give deterministic online algorithms for the variants of online file caching studied in this paper.

\subsection{Our contributions}

We study rental caching, caching with zapping, and rental caching with zapping. We present deterministic and randomized lower and upper bounds for these new variants of paging, weighted caching, and caching in the online setting. We use the approach in \citep{young12greedy} to give deterministic algorithms for these online problems. While this approach is general, it doesn't necessarily give optimal online algorithms. The direct application of this approach yields sub-optimal algorithms in some of the cases we study in this paper. We describe these scenarios and also the appropriate modifications to the algorithm to achieve better competitive ratios.

\begin{table}[ht]\footnotesize
	\centering
	\label{tab:result-summary}
	\caption{Competitive ratios in this paper}
	\begin{tabular}[width=\textwidth]{| l *{4}{| c } | }
		\toprule
		\textbf{Problem} & & & \textbf{Lower Bound} & \textbf{Upper Bound} \\
		\toprule
		
		\multirow{7}{*}{Rental paging} & \multirow{4}{*}{Deterministic} & $\lambda \ge \inv{k}$ & $2 - \lambda$ & 2 \\
		\cmidrule{3-5}
		& & $\inv{k^2} \le \lambda < \inv{k}$ & \multirow{2}{*}{$\frac{k + k \lambda}{1 + k^2 \lambda}$} & $1+\inv{k\lambda}$ \\
		\cmidrule{3-3} \cmidrule{5-5}
		& & $\lambda < \inv{k^2}$&  & $k$ \\
		\cmidrule{2-5}
		& \multirow{2}{*}{Randomized} & $\lambda \ge \inv{k}$ & $\frac{e}{e-1}$ & $\frac{e}{e-1}$ \\
		\cmidrule{3-5}
		& & $\lambda < \inv{k}$ & $\frac{H_k + k^2 H_{k} \lambda}{1 + k^2 H_{k} \lambda}$ & $H_k + \frac{e}{e-1}$ \\
		\bottomrule

		\multirow{6}{*}{Weighted rental paging} & \multirow{3}{*}{Deterministic} & $\lambda \ge \inv{k}$ & $2 - \lambda$ & \multirow{3}{*}{$k$} \\
		\cmidrule{3-4}
		& & $\lambda < \inv{k}$ & $\frac{k + k \lambda}{1 + k^2 \lambda}$ & \\
		\cmidrule{2-5}
		& \multirow{2}{*}{Randomized} & $\lambda \ge \inv{k}$ & $\frac{e}{e-1}$ & $\frac{e}{e-1}$ \\
		\cmidrule{3-5}
		& & $\lambda < \inv{k}$ & $\frac{H_k + k^2 H_{k} \lambda}{1 + k^2 H_{k} \lambda}$ & $H_k + \frac{e}{e-1}$ \\
		\bottomrule

		Rental caching & \multicolumn{4}{|c|}{Same as weighted rental paging} \\
		\bottomrule
		
		Rental caching: fault model & \multicolumn{4}{|c|}{Same as rental paging} \\
		\bottomrule
		
		Paging with zapping & Deterministic & & $\frac{N(2k + 1) - (k+1)}{N + 2k}$ & $\min (N, 2k+1)$ \\
		\bottomrule
		
		Weighted Paging with zapping & \multicolumn{4}{|c|}{Same as paging with zapping} \\
		\bottomrule
		
		Weighted Paging with zapping & \multicolumn{4}{|c|}{Same as paging with zapping} \\
		\bottomrule
		
		\multirow{4}{*}{Rental paging with zapping} & \multirow{3}{*}{Deterministic} & $\lambda \ge \inv{k}$ & & 3 \\
		\cmidrule{3-3} \cmidrule{5-5}
		& & $\inv{k^2} \le \lambda < \inv{k}$ & & $1+\frac{2}{k\lambda}$ \\
		\cmidrule{3-3} \cmidrule{5-5}
		& & $\lambda < \inv{k^2}$&  & $2k+1$ \\
		\bottomrule
		
		Weighted rental paging with zapping & Deterministic & & & $2k+1$ \\
		\bottomrule
		
		Rental caching with zapping & \multicolumn{4}{|c|}{Same as weighted rental paging with zapping} \\
		\bottomrule
		
		Rental caching with zapping: fault model & \multicolumn{4}{|c|}{Same as rental paging with zapping} \\
		\bottomrule
		
	\end{tabular}
\end{table}

Table \ref{tab:result-summary} presents the summary of the results in this paper.

For rental paging and for fault model, the deterministic upper and lower bounds in this paper are tight within constant factors. 
For the randomized case, the lower and upper bounds are tight within constant factors when $\lambda$ is $O(\inv{k^2 H_{k}})$ and when $\lambda \ge \inv{k}$. For weighted rental paging and for rental caching, the upper and lower bounds are tight within constant factors when $\lambda < \inv{k}$ for the deterministic case, and when $\lambda$ is $O(\inv{k^2 H_{k}})$ or when $\lambda \ge \inv{k}$ for the randomized case.
The bounds for the variants with rental cost are within constant factors of the bounds for the variants without rental cost when $\lambda \le \inv{k^2}$ for the deterministic case and when $\lambda$ is $O(\inv{k^2H_{k}})$ for the randomized case. For higher values of $\lambda \ge \inv{k}$ we show constant lower bounds and matching upper bounds.

For paging with zapping, weighted paging with zapping, and caching with zapping, the deterministic lower and upper bounds in this paper are tight within constant factors.

\subsection{Other work on rental paging}

\citet{salinger2012rental}, in an independent work, study the rental paging problem. They give a deterministic polynomial time algorithm for the offline problem by reducing it to interval weighted interval scheduling. They show that any \emph{conservative} or \emph{marking} algorithm is $k$-competitive and that the bound is tight. An algorithm is \emph{conservative} if it incurs at most $k$ faults on any consecutive subsequence of requests that contains at most $k$ distinct pages. A \emph{marking} algorithm marks each page when it is requested, and when it is required to evict a page, it evicts an unmarked page. If there are no unmarked pages, it first unmarks all the pages and then removes one.

For any online algorithm $A$ for paging, define the algorithm $A_d$ for rental paging as follows. $A_d$ behaves like $A$ with the modification that any page in the cache that has not been requested for $d$ steps is evicted. They define a class of online algorithms $M_d$, where $M$ is any conservative or marking algorithm. They show an upper bound of $2$ on the competitive ratio of $M_{\inv{\lambda}}$ when $\lambda > \inv{k}$, which matches the upper bound in this paper. They show an upper bound of $\max{(k, \frac{(k+1)}{1 + \lambda(k - 1)})}$ on the competitive ratio of $M_{\inv{\lambda}}$ when $\lambda \le \inv{k}$. This upper bound is weaker than the upper bound we present in this paper when $\inv{k^2} < \lambda < \inv{k}$.

Their deterministic lower bound on the competitive ratio for rental paging matches the lower bound in this paper.

They also present experimental results for the performance of various LRU, LRU$_{\inv{\lambda}}$, FWF, FWF$_{\inv{\lambda}}$, FIFO, FIFO$_{\inv{\lambda}}$, and the optimal offline algorithm. The experimental results agree with the upper bounds shown in the paper.

They present results only for rental paging and not for weighted rental paging or rental caching. They do not study the rental paging problem in the randomized setting.

\section{Online covering approach}
\label{sec:cilp-approach}

In this section, we give a brief overview of the online covering approach from \citep{young12greedy}. We use this approach, with modifications in some cases, to give deterministic algorithms for the variants of paging and caching problems in this paper. The idea is to reduce the given problem to online covering and then use the online covering algorithm from \citep{young12greedy} as follows. In online covering the constraints are revealed one at a time in any order. Whenever the algorithm gets a constraint that is not yet satisfied, it raises each variable in the constraint, at the rate inversely proportional to the coefficient of the variable in the cost function, until the constraint is satisfied. This algorithm is $\Delta$-competitive, where $\Delta$ is the maximum number of variables in any constraint.

Now we illustrate this approach for the case of paging. To formulate paging as a Covering Integer Linear Program (CILP), we define the following notation and continue using it in the remainder of the paper.
\begin{itemize}
	\item $f_t$ : file requested at time $t$
	\item $t^\prime$ : time of next request to the file requested at time $t$
	\item $x_t$ : indicator variable for the event that the file requested at time $t$ was evicted before $t^\prime$
	\item $R(t)$ : set of times of the most recent request to each file until and including time $t$
	\item $Q(t)$ : $\{ Q \subseteq R(t) - \{t\} : |Q| = k \}$. That is, $Q(t)$ represents all possible ways that the cache can be full when $f_t$ is requested at time $t$.
	\item $T$ : time of last request
\end{itemize}

We formulate paging as follows (LP-Paging):
\begin{align*}
	\min & \quad \sum_{t = 1}^{T}{x_t} \\
	\text{s.t.} \quad \forall t, \forall Q \in Q(t): & \quad \sum_{s \in Q}{\lfloor x_s \rfloor} \ge 1
\end{align*}

Each constraint represents the following. At time $t$, when $f_t$ is requested, for any subset $Q$ of $Q(t)$, it must be true that at least of the files, corresponding to the times in $Q$, must be evicted to make space for $f_t$. Clearly, any feasible solution to the paging problem, is a feasible solution to Paging-LP. In particular, any optimal solution to the paging problem, is a feasible solution to Paging-LP. For any variable $x$, $x^*$ denotes the value of $x$ in the optimal solution.

Now we describe the CILP based algorithm for paging. Note that, at each time step the algorithm may get multiple constraints. The algorithm considers the constraints in arbitrary order. When it gets a constraint that is not yet satisfied, it raises each variable om the constraint at unit rate until the constraint is satisfied. Whenever a variable reaches 1, the algorithm evicts the corresponding file from the cache. If the algorithm gets a constraint that is already satisfied, the algorithm does not do anything. We say, the algorithm does \emph{work} on a constraint, if it wasn't already satisfied and the algorithm raises the variables in the constraint, as described above, to satisfy it.

Each constraint in LP-Paging has exactly $k$ variables. Now we show that this algorithm is $k$-competitive using the following potential function.

$$ \phi = \sum_{t}{ \max (x^*_t - x_t) } $$

Initially, $\phi = \OPT$ and $\ALG = 0$. When the algorithm gets a constraint that is not satisfied, it raises each variable in the constraint at rate 1. So, the cost of the algorithm increases at the rate $k$. Also, $\phi$ decreases at unit rate because there is at least one variable $x_s$ in the constraint such that $x_s < x^*_s$ (otherwise the constraint would already be satisfied). Thus, the algorithm maintains the invariant $\ALG/k + \phi \le \OPT$. Since, $\phi \ge 0$, $\ALG \le k \cdot \OPT$.

For the variants in this paper, we use the approach outlined above, but with modifications in some cases. When we use the algorithm without any modifications, we omit the proofs for the competitive ratio. For these cases, the competitive ratio is the maximum number of variables in any constraint on which the algorithm does some work. If we apply any modifications, we present complete proofs.

\section{Rental caching}

\subsection{Deterministic algorithms using CILP}
\label{sec:rental-cilp}

In this section, we present a deterministic algorithm, \rentalpagingcilp, for rental paging, and then extend the algorithm to rental caching. Our algorithm is based on the greedy online covering algorithm outlined in Section \ref{sec:cilp-approach}. We use the notation defined in Section \ref{sec:cilp-approach}. In addition, we define the following indicator variable to account for renting files.
\begin{itemize}
	\item $y_{t, s}$ : indicator variable for the event that the file requested at time $t$ pays the rental cost at time $s < t^\prime$
\end{itemize}

The following is the formulation for rental paging (LP-Rental-Paging):
\begin{align*}
	\min & \quad \sum_{t = 1}^{T}{(x_t + \lambda \sum_{t \le s < t^\prime}{ y_{t, s} })} \\
	\text{s.t.} \quad \forall t, \forall Q \in Q(t): & \quad \sum_{s \in Q}{\lfloor x_s \rfloor} \ge 1 & (I) \\
	\quad \forall t, t \le s < t^\prime: & \quad \lfloor y_{t, s}  \rfloor + \lfloor x_t \rfloor \ge 1 & (II)
\end{align*}

The first set of constraints $(I)$ enforce the cache size at time $t$ (same as the constraints in LP-Paging), and the second set of constraints $(II)$ say that either a file has been evicted or it is being rented at time $s$. We denote them by \emph{cache-size} constraints and \emph{rent-evict} constraints, respectively.

For each request, \rentalpagingcilp gets some cache-size constraints and some rent-evict constraints. It considers the rent-evict constraints before the cache-size constraints. Whenever it gets a constraint that is not satisfied, it raises each variables in the constraint at the rate inversely proportional to its cost in the objective, until the constraint is satisfied. So, the algorithm raises $x_s$ at unit rate and $y_{t, s}$ at rate $\inv{\lambda}$.

For some $\gamma > 0$, \rentalpagingcilp$_\gamma$ is the algorithm that behaves like \rentalpagingcilp\ with the following modification. \rentalpagingcilp$_\gamma$ raises $y_{t, s}$ at the modified rate of $\frac{\gamma}{\lambda}$. Note that, \rentalpagingcilp$_1$ and \rentalpagingcilp\ are the same algorithm.

\begin{theorem}
	\label{thm:paging-rent-cilp}
	For rental paging, (a) \rentalpagingcilp\ is $2$-competitive when $\lambda \ge \inv{k}$, (b) \rentalpagingcilp$_{k \lambda}$ is $(1 + \inv{k \lambda})$-competitive when $\inv{k^2} < \lambda < \inv{k^2}$, and (c) \rentalpagingcilp is $k$-competitive when $\lambda \le \inv{k^2}$.
\end{theorem}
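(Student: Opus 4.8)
The plan is to analyze \rentalpagingcilp$_\gamma$ by bounding the work it does against the optimal LP value, following the potential-function scheme from Section~\ref{sec:cilp-approach}, with the potential
\[
\phi \;=\; \sum_t \max(0, x^*_t - x_t) \;+\; \lambda \sum_t \sum_{t \le s < t'} \max(0, y^*_{t,s} - y_{t,s}),
\]
and then optimizing over $\gamma$ (and separately handling the edge cases). The key observation is that each constraint type contributes differently to the ratio ``increase in $\ALG$ per unit decrease in $\phi$'', and that the rent-evict constraints are processed first, which lets us control how much $x_t$ has already been raised before a cache-size constraint touches it.

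First I would treat the rent-evict constraints $(II)$: such a constraint $\lfloor y_{t,s}\rfloor + \lfloor x_t\rfloor \ge 1$ has only $\Delta = 2$ variables, $x_t$ raised at rate $1$ and $y_{t,s}$ raised at rate $\gamma/\lambda$. When the algorithm does work on it, $\ALG$ increases at rate $1 + \lambda\cdot(\gamma/\lambda) = 1 + \gamma$, while $\phi$ drops at rate at least... here is the subtlety: in the optimal LP solution either $x^*_t \ge 1$ or $y^*_{t,s} \ge 1$, so at least one of the two ``slack'' terms is decreasing; in the worst case it is the $y$-term, decreasing at rate $\lambda \cdot (\gamma/\lambda) = \gamma$, giving a local ratio of $(1+\gamma)/\gamma = 1 + 1/\gamma$. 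Next I would treat the cache-size constraints $(I)$: these have $k$ variables $x_s$, all raised at unit rate, so $\ALG$ increases at rate $k$ while $\phi$ drops at rate at least $1$ (some $x_s < x^*_s$), giving local ratio $k$. To do better than $k$ for these constraints when $\lambda$ is not too small, I would argue that whenever \rentalpagingcilp$_\gamma$ does work on a cache-size constraint at time $t$, the variable $x_t$ has \emph{already been raised to $1$} by the rent-evict constraints for the interval $[t, t')$ — because between two consecutive requests to a file there are at least... no: the point is that a cache-size constraint $Q \in Q(t)$ only needs to be ``fresh'' (i.e.\ the algorithm only does work) when none of the $k$ files indexed by $Q$ has been evicted, and for each such $s \in Q$, enough rent-evict constraints have fired on $s$ to push $y$-mass (and hence, once $x_s$ would be cheaper, $x_s$-mass) — so effectively the ``true'' cost of satisfying a cache-size constraint, combined with the rental cost already paid, is bounded. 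The cleanest framing: charge the rental payments of $\OPT$ against the rental-like work of $\ALG$, and the eviction work of $\ALG$ against $\OPT$'s evictions plus its rentals, and show the combined ratio is $\max(1 + 1/\gamma,\ \text{something})$.

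For part~(a), with $\gamma = 1$ (so $1 + 1/\gamma = 2$) and $\lambda \ge 1/k$, I would show the cache-size constraints also never cost more than a factor $2$: the intuition is that a file $f_t$ sitting in \rentalpagingcilp's cache over the interval $[t,t')$ accrues rental cost $\ge \lambda(t'-t)$; if $\OPT$ keeps it, $\OPT$ pays the same, and if $\OPT$ evicts it, then $\lambda(t'-t)$ can be charged against the $x^*_t = 1$ unit of $\OPT$'s eviction cost only when $t'-t \le 1/\lambda \le k$, and the interplay with $k$-sized cache-size constraints yields the constant $2$ rather than $k$. For part~(b), with $\gamma = k\lambda$ (legal since $1/k^2 < \lambda$ forces $\gamma > 1/k$, and $\lambda < 1/k$ keeps $\gamma < 1$, so the $y$-rate $\gamma/\lambda = k$ exceeds the $x$-rate — wait, that makes rent-evict local ratio $1 + 1/\gamma = 1 + 1/(k\lambda)$ exactly), and I would check the cache-size constraints also come in at ratio $\le 1 + 1/(k\lambda)$; since $1 + 1/(k\lambda) \ge 1 + 1/(k\cdot(1/k)) = 2$ in this range and grows as $\lambda$ shrinks, this is the binding term. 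For part~(c), $\lambda \le 1/k^2$ means $k\lambda \le 1/k$ so $1 + 1/(k\lambda) \ge k+1 > k$, and we simply fall back to the unmodified algorithm ($\gamma = 1$) whose worst constraint has $\max(2, k) = k$ variables when $\lambda$ is this small — here one checks $1 + 1/\gamma = 2 \le k$ and the cache-size bound $k$ dominates, so the bare online-covering guarantee gives $k$-competitiveness directly by the argument in Section~\ref{sec:cilp-approach}.

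The main obstacle I anticipate is the cache-size constraint analysis in parts~(a) and~(b): the naive online-covering bound gives $k$ (the number of variables), and beating it requires exploiting that (i) rent-evict constraints are processed first and (ii) the rental cost already paid on the files in $Q$ can be ``reused'' in the charging argument against $\OPT$ — essentially showing that whenever a cache-size constraint forces $\ALG$ to do $k$ units of eviction work, $\OPT$ has already been forced (by the same request pattern that made the cache full) to pay a comparable amount in rentals or evictions, scaled by $\lambda$. Making this precise — identifying exactly which dual charges pay for the $x_s$-raising on a cache-size constraint without double-counting across the many overlapping constraints at a single time step — is the delicate part; the rent-evict side is routine once the local-ratio bookkeeping is set up. I would also need to double-check the boundary $\lambda = 1/k$ where parts~(a) and~(b) meet (both give~$2$) and confirm the stated hypothesis ``$\tfrac1{k^2} < \lambda < \tfrac1{k^2}$'' is a typo for ``$\tfrac1{k^2} < \lambda < \tfrac1{k}$''.
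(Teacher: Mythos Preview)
Your potential function, your local-ratio analysis of the rent-evict constraints (rate of $\ALG$-increase $1+\gamma$, rate of $\phi$-decrease $\min(1,\gamma)$, hence ratio $1+1/\gamma$ when $\gamma\le 1$), and your treatment of part~(c) all match the paper. You are also right that the stated range in~(b) is a typo for $\tfrac1{k^2}<\lambda<\tfrac1k$.

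The gap is in parts~(a) and~(b). You flag the cache-size constraints as ``the main obstacle'' and sketch a delicate charging argument that reuses $\OPT$'s rental payments to pay for the algorithm's $x_s$-raising work on those constraints. No such argument is needed: in these two regimes the algorithm \emph{never does any work on a cache-size constraint at all}. You actually grazed this idea (``enough rent-evict constraints have fired on $s$ to push \dots $x_s$-mass'') and then backed off to the weaker conclusion that the combined cost is ``bounded''. The paper's observation is sharper and simpler: when a rent-evict constraint $\lfloor y_{t,s}\rfloor+\lfloor x_t\rfloor\ge 1$ is processed, $y_{t,s}$ runs from $0$ to $1$ while $x_t$ increases by exactly $\lambda/\gamma$ (that is, $\lambda$ in~(a), and $1/k$ in~(b) with $\gamma=k\lambda$). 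Hence after at most $\gamma/\lambda\le k$ rent-evict constraints---i.e.\ within $k$ time steps of its most recent request---the file has $x_t=1$ and is evicted. Since rent-evict constraints are processed before cache-size constraints, any cache-size constraint at time $t$ with $Q=\{s_1<\cdots<s_k\}\subset R(t)\setminus\{t\}$ has $t-s_1\ge k$, so $x_{s_1}=1$ already and the constraint is pre-satisfied. Thus the only constraints ever worked on in~(a) and~(b) are rent-evict constraints, and the competitive ratio is exactly the rent-evict local ratio $(1+\gamma)/\min(1,\gamma)$, giving $2$ and $1+\tfrac1{k\lambda}$ respectively. Your anticipated ``delicate part'' never arises.
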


\begin{proof}
	(a) $\inv{k} \le \lambda$: We claim that, at any given time, if all the rent-evict constraints are satisfied, the cache-size constraints are satisfied too. We prove this by showing that each file is evicted within $k$ steps from its latest request, by considering just the rent-evict constraints.
	At any given time, the algorithm considers the rent-evict constraint corresponding to each file in the cache. In the rent-evict constraint at time $t$, when $y_{t, s}$ goes from 0 to 1, $x_s$ increases by $\lambda$. So, if the file has been in the cache for $t$ time steps since its latest request, $x_s = t \lambda$, which is at least 1 for $t \ge \inv{\lambda}$. Since, $\inv{\lambda} \le k$, $x_s$ will be 1 in at most $k$ steps. Thus, the algorithm does work only on rent-evict constraints, each of which has exactly 2 variables. So, \rentalpagingcilp\ is 2-competitive.

	(b) $\inv{k^2} < \lambda < \inv{k}$: When the algorithm considers a rent-evict constraints, it raises $x_s$ at unit rate, but raises $y_{t, s}$ at rate $\frac{\gamma}{\lambda}$, where $\gamma = k \lambda$. The increment in $x_s$ is $\frac{\lambda}{\gamma}$ at each time step. So, for $\gamma \le k \lambda$, within $k$ steps $x_s \ge 1$ and hence the corresponding file is evicted. Thus, like in the previous case, the algorithm never does any work on the cache-size constraints. Now we show that this algorithm is $(1 + \inv{k \lambda})$-competitive. The proof is similar to the proof in Section \ref{sec:cilp-approach}. We use the following potential function for our proof:

	$$ \phi = \sum_{t=1}^{T}{\Big( \max{(x^*_t - x_t, 0)}} + \sum_{t \le s < t^\prime}{\lambda \max{(y^*_{t, s} - y_{t, s}, 0)} \Big)}$$

	Consider the rent-evict constraint at time $s$ for the file whose most recent request was at time $t$. When the algorithm raises the variables in the constraint, the cost of the algorithm increases at the rate $(1 + \gamma)$. Also, $\phi$ decreases at the rate $\min (1, \gamma)$. Thus, the algorithm maintains the invariant $\ALG/(1 + \gamma) + \phi/(\min (1, \gamma)) \le \OPT$. It is true initially, because $\ALG = 0$ and $\phi = \OPT$. Since, $\phi \ge 0$, this implies that $\ALG \le \frac{1 + \gamma}{\min (1, \gamma)} \OPT$. Also, $\gamma = k \lambda \le 1$. So, $\ALG \le (1 + \inv{k \lambda}) \OPT$.
	
	(c) $\lambda \le \inv{k^2}$: In this case, \rentalpagingcilp\ does work on both cache-size constraints and rent-evict constraints, and thus, the algorithm is $k$-competitive.
\end{proof}

Now we extend the results to rental caching and present the algorithm \rentalcachingcilp. For rental caching, the linear program is similar to the linear program for rental paging, with appropriate changes to take into account the cost and the size of each file.
We define $Q(t)$ to take into account the file sizes as follows. $Q(t) = \{ Q \subseteq R(t) - \{t\} : k - size(f) < size(Q) \le k \}$, where $size(Q) = \sum_{t \in Q}{size(f_t)}$. We modify the objective to take into account the cost of files. The following is the formulation for rental caching (LP-Rental-Caching):
\begin{align*}
	\min & \quad \sum_{t = 1}^{T}{(cost(f_t) \cdot x_t + \lambda \sum_{t \le s < t^\prime}{ y_{t, s} })} \\
	\text{s.t.} \quad \forall t, \forall Q \in Q(t): & \quad \sum_{s \in Q}{size(f_s) \cdot \lfloor x_s \rfloor} \ge size(f_t) \\
	\quad \forall t, t \le s < t^\prime: & \quad \lfloor y_{t, s}  \rfloor + \lfloor x_t \rfloor \ge 1
\end{align*}

When \rentalcachingcilp\ gets a rent-evict constraint that is not yet satisfied, it raises $x_t$ at rate $\inv{cost(f_t)}$ and $y_{t, s}$ at rate $\inv{\lambda}$. When it gets a cache-size constraint, it raises $x_s$ at rate $\inv{cost(f_s)}$.

\begin{theorem}
	\label{thm:retnal-caching-cilp}
	\rentalcachingcilp\ is $k$-competitive for rental caching.
\end{theorem}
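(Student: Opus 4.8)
The plan is to mimic the $k$-competitiveness argument for LP-Paging from Section~\ref{sec:cilp-approach}, but with a potential function scaled to account for the non-unit costs $cost(f_t)$ and the rental terms. The key observation is that \emph{every} constraint in LP-Rental-Caching has at most $k$ variables: a cache-size constraint at time $t$ involves the variables $x_s$ for $s \in Q$, and since each file has size at least $1$ while $size(Q) \le k$, we have $|Q| \le k$; a rent-evict constraint has exactly $2$ variables. Since the online covering algorithm from \citep{young12greedy} is $\Delta$-competitive with $\Delta$ the maximum number of variables in any constraint it does work on, and here $\Delta \le k$, the result is immediate from that black box. However, to be self-contained I would instead exhibit the potential-function proof directly.

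First I would set up the potential function
$$\phi = \sum_{t=1}^{T}\Big( cost(f_t)\max(x^*_t - x_t, 0) + \sum_{t \le s < t'} \lambda \max(y^*_{t,s} - y_{t,s}, 0) \Big),$$
and verify that initially $\ALG = 0$ and $\phi = \OPT$, since the $x,y$ variables start at $0$ and $x^*, y^*$ is an optimal (hence feasible) fractional solution whose objective value is $\OPT$. Next I would analyze what happens when the algorithm does work on an unsatisfied constraint. For a cache-size constraint at time $t$: the algorithm raises each $x_s$ (for $s \in Q$) at rate $\inv{cost(f_s)}$, so the objective $\ALG$ grows at rate at most $\sum_{s \in Q} cost(f_s)\cdot\inv{cost(f_s)} = |Q| \le k$. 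Since the constraint is unsatisfied by the integral part, feasibility of $x^*$ forces some $s \in Q$ with $\lfloor x^*_s\rfloor = 1 > \lfloor x_s \rfloor$, so in particular $x_s < 1 \le x^*_s$ and that term of $\phi$ is decreasing; $cost(f_s)\max(x^*_s - x_s,0)$ drops at rate $cost(f_s)\cdot\inv{cost(f_s)} = 1$. Hence $\phi$ decreases at rate at least $1$ while $\ALG$ increases at rate at most $k$. For a rent-evict constraint at time $s$ with most recent request $t$: $\ALG$ grows at rate $cost(f_t)\cdot\inv{cost(f_t)} + \lambda\cdot\inv{\lambda} = 2 \le k$ (assuming $k \ge 2$; the $k=1$ case is trivial or handled separately), and again one of the two relevant $\phi$-terms is strictly decreasing at rate $1$.

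From this I would conclude that the invariant $\ALG/k + \phi \le \OPT$ is maintained throughout: it holds initially, and on every unit of work the left side changes by $(\text{rate} \le k)/k - (\text{rate} \ge 1) \le 0$. Since $\phi \ge 0$ always, this yields $\ALG \le k\cdot\OPT$, which is the claim. The one subtlety to be careful about is the ordering of constraints and the possibility that raising $x_t$ to satisfy a rent-evict constraint simultaneously affects cache-size constraints; but because the potential-function bookkeeping is charged per unit of variable increase regardless of which constraint triggered it, and because the $\phi$-decrease argument only needs \emph{some} witnessing variable to lie strictly below its optimal value in any unsatisfied constraint, this causes no problem — the main obstacle is really just checking that $\Delta \le k$ for the cache-size constraints, which follows from $size(f_s) \ge 1$. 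A further routine point is handling files with $cost(f_t) = 0$: such a variable $x_t$ is raised at infinite rate, i.e.\ the corresponding file is evicted immediately at no cost, so it never contributes to $\ALG$ or constrains anything, and can be ignored in the accounting.
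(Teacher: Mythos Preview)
Your proposal is correct and takes essentially the same approach as the paper: the paper's proof is the single observation that since each file has size at least $1$, every constraint has at most $k$ variables, whence the $\Delta$-competitive black box from Section~\ref{sec:cilp-approach} gives the result. Your first paragraph is exactly this argument; the remainder just unpacks the potential-function bookkeeping that the paper explicitly omits (per the remark at the end of Section~\ref{sec:cilp-approach}).
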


\begin{proof}
	Since each file has size at least 1, each constraint has at most $k$ variables. So, for the general case of rental caching, the algorithm is $k$-competitive.
\end{proof}

\begin{corollary}
	\label{cor:rental-fault-model-cilp}
	For rental caching for the case of fault model, (a) \rentalcachingcilp\ is $2$-competitive when $\lambda \ge \inv{k}$, (b) \rentalcachingcilp$_{k lambda}$ is $(1 + \frac{1}{k \lambda})$-competitive when $\inv{k^2} < \lambda < \inv{k^2}$, and (c) \rentalcachingcilp\ is $k$-competitive when $\lambda \le \inv{k^2}$.
\end{corollary}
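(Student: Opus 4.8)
The plan is to reduce the corollary to the three cases of Theorem~\ref{thm:paging-rent-cilp} by exploiting the way the fault model collapses \rentalcachingcilp. In the fault model $cost(f_t)=1$ for every $t$, so whenever \rentalcachingcilp\ (or \rentalcachingcilp$_\gamma$) does work on a constraint it raises each $x$-variable at unit rate and each $y_{t,s}$ at rate $\inv{\lambda}$ (resp.\ $\frac{\gamma}{\lambda}$), exactly as \rentalpagingcilp\ does on LP-Rental-Paging. The only structural difference from the paging LP is that the cache-size constraints are size-weighted, $\sum_{s\in Q}size(f_s)\lfloor x_s\rfloor\ge size(f_t)$; but since every $size(f_s)\ge 1$ and $size(Q)\le k$, each such constraint still has at most $k$ variables, and each rent-evict constraint still has exactly $2$.

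Case (c), $\lambda\le\inv{k^2}$, is then immediate: \rentalcachingcilp\ may do work on cache-size constraints, but each has at most $k$ variables, so the online-covering template of Section~\ref{sec:cilp-approach} gives $k$-competitiveness (this also follows from Theorem~\ref{thm:retnal-caching-cilp}).

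For cases (a) and (b) I would retrace the proof of Theorem~\ref{thm:paging-rent-cilp}(a),(b). The first step is to show that, considering only the rent-evict constraints, the variable $x_t$ of the file last requested at time $t$ reaches $1$ within $k$ time steps: on the rent-evict constraint at time $s$ the algorithm raises $x_t$ at rate $1/cost(f_t)=1$ and $y_{t,s}$ at rate $\inv{\lambda}$ (resp.\ $\frac{\gamma}{\lambda}$ with $\gamma=k\lambda$), so $x_t$ gains $\lambda$ (resp.\ $\frac{\lambda}{\gamma}=\inv{k}$) each step and hence hits $1$ within $k$ steps because $\lambda\ge\inv{k}$ (resp.\ by the choice of $\gamma$). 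The second step is to deduce from this that, by the time $f_t$ is requested, every cache-size constraint at time $t$ is already satisfied, so the algorithm only ever does work on the $2$-variable rent-evict constraints. Granting that, case (a) ($\gamma=1$) gives $2$-competitiveness at once, and case (b) follows from the potential-function argument of Theorem~\ref{thm:paging-rent-cilp}(b) applied with $\phi=\sum_{t=1}^{T}\Big(cost(f_t)\max(x^*_t-x_t,0)+\sum_{t\le s<t^\prime}\lambda\max(y^*_{t,s}-y_{t,s},0)\Big)$, which (since $cost(f_t)=1$) is the very potential used there: on each worked rent-evict constraint the cost grows at rate $1+\gamma$ while $\phi$ drops at rate $\min(1,\gamma)$, giving $\ALG/(1+\gamma)+\phi/\min(1,\gamma)\le\OPT$ and hence, with $\gamma=k\lambda\le 1$, $\ALG\le(1+\inv{k\lambda})\OPT$.

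I expect the second step above to be the main obstacle. In the paging proof it is essentially free: ``evicted within $k$ steps'' means at most $k-1$ other files are resident when $f_t$ arrives, hence the cache of size $k$ still fits $f_t$. With file sizes this reasoning breaks --- a bounded number of resident files can have unbounded total size --- so I would instead carry a strengthened invariant through the induction: at the start of each step $t$, the files whose most recent request lies within the preceding $\lceil\inv{\lambda}\rceil-1$ steps have total size at most $k$, and \rentalcachingcilp\ processes all rent-evict constraints before any cache-size constraint, so that any file resident for a full $\lceil\inv{\lambda}\rceil-1$ steps is evicted before $f_t$ is inserted. Using $\lceil\inv{\lambda}\rceil\le k$ (which holds because $\lambda\ge\inv{k}$ in (a) and because $x_t$ gains $\inv{k}$ per step in (b)), this should force $size(A^\prime)+size(f_t)\le k$ for the cache $A^\prime$ remaining after the rent-evict phase, so no cache-size constraint is ever violated. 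Pinning down the off-by-one in the eviction time and checking this invariant against the size bound inherited from the previous step is the delicate part; everything else is a transcription of Section~\ref{sec:cilp-approach} and Theorem~\ref{thm:paging-rent-cilp}.
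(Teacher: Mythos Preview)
Your core reduction---observing that in the fault model $cost(f_t)=1$, so the objective and the rent-evict constraints coincide with those of LP-Rental-Paging, and hence the analysis of Theorem~\ref{thm:paging-rent-cilp} should carry over---is exactly the paper's argument. The paper's entire proof is the sentence ``For the fault model, $cost(f)$ is $1$ for each file $f$. So, the cost function and the rent-evict constraints are the same as in the case of rental paging \ldots\ Thus, the three cases of Theorem~\ref{thm:paging-rent-cilp} still hold.'' In particular, the paper does \emph{not} engage at all with the size-weighted cache-size constraints; it simply asserts that Theorem~\ref{thm:paging-rent-cilp} transfers.

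You go further than the paper and correctly flag that the implication ``each file evicted within $k$ steps $\Rightarrow$ every cache-size constraint is already satisfied'' is not automatic once files have nontrivial sizes. That concern is genuine. Unfortunately, the invariant you propose to close the gap does not hold. Take $k=10$, $\lambda=\inv{k}=\inv{10}$, and request three distinct files of size $5$ at steps $1,2,3$. After the rent-evict phase at step $3$, the eviction variables for the first two files are only about $3\lambda$ and $2\lambda$, so both are still resident; the residual cache $A'$ has total size $10$, and $size(A')+size(f_3)=15>k$. The algorithm must therefore do work on a cache-size constraint, contradicting both your invariant and the claim that only $2$-variable constraints are ever touched. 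The same example breaks the $\gamma=k\lambda$ variant in case~(b).

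So your proposal matches the paper's route and then goes beyond it, but the extra step you attempt does not succeed: you have located a real gap that the paper glosses over, yet your sketch does not close it. Parts~(a) and~(b) appear to need either a different argument or a more restricted statement in the fault model with nontrivial file sizes; part~(c) is fine for the reason you give.
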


\begin{proof}
	For the fault model, $cost(f)$ is 1 for each file $f$. So, the cost function and the rent-evict constraints are the same as in case of rental paging with zapping. Thus, the three cases of Theorem \ref{thm:paging-rent-cilp} still hold.
\end{proof}

\subsection{Rental caching with infinite cache}
\label{sec:rent-inf-cache}

Consider the special case of the rental paging (or caching) problem where the cache has infinite size. This is equivalent to the rental caching problem without any cache size constraint. Even though there is no cache size constraint, this problem is still interesting because there is a rental cost for keeping files in the cache.

\begin{theorem}
	\label{thm:rent-inf-cache}
	If there is an $\alpha$-competitive algorithm $\ALGSR$ for ski-rental, then there is a $(\frac{\lambda + \alpha}{\lambda + 1})$-competitive algorithm for rental caching with infinite cache.
\end{theorem}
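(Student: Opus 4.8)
The plan is to reduce rental caching with infinite cache to the ski-rental problem on a per-file basis. With no cache-size constraint, the only interaction between files is through the shared objective, so the optimal offline (and any online) policy decomposes across files: for each file $f$, and for each maximal gap between two consecutive requests to $f$, the decision is exactly whether to keep $f$ in the cache for the whole gap (paying rental $\lambda$ per step, analogous to ``renting skis'') or to evict it at some point and pay $cost(f)$ to re-fetch it at the next request (analogous to ``buying''). The one wrinkle is that the gap length is revealed online (we do not know when the next request comes, exactly as we do not know when the ski season ends), which is precisely the uncertainty ski-rental handles. So first I would make this decomposition explicit: write $\OPT$ as a sum over (file, gap) pairs of the per-gap optimal cost, and observe each per-gap subproblem is an instance of ski-rental with rental rate $\lambda$ and buy cost $cost(f)$ (after the first request to $f$, which is an unavoidable retrieval charged to every algorithm and can be absorbed into the additive constant, or noted to be paid identically by $\OPT$).

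Next I would define the online algorithm for rental caching with infinite cache: on each gap for file $f$, run a copy of $\ALGSR$ scaled so that its ``buy'' cost is $cost(f)$ and its per-day rental is $\lambda$; when $\ALGSR$ decides to ``buy,'' this corresponds to having kept the file the whole time, and when it decides to keep ``renting'' until a threshold and then stop, we evict the file at that step. Since $\ALGSR$ is online for ski-rental and the gap length is the unknown horizon, this is a valid online algorithm for our problem. Its cost on each gap is at most $\alpha$ times the per-gap optimum.

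The competitive ratio improvement from $\alpha$ to $(\lambda+\alpha)/(\lambda+1)$ comes from the fact that in each gap there is a mandatory rental cost of at least $\lambda$ for the step at which the request occurs (the file must be in the cache at the moment it is requested), or more generally a baseline of rental that both $\ALG$ and $\OPT$ pay and on which no one can do better than $1$-competitive. Concretely, I would split each per-gap cost into (i) a fixed ``one step of rent'' term of value $\lambda$ that every algorithm pays identically, and (ii) a residual ski-rental instance with buy cost $cost(f)$ and rental rate $\lambda$ on which $\ALGSR$ is $\alpha$-competitive; then $\ALG \le \lambda\cdot(\#\text{gaps}) + \alpha\cdot\OPT_{\text{residual}}$ while $\OPT \ge \lambda\cdot(\#\text{gaps}) + \OPT_{\text{residual}}$, and dividing through (using that the worst case is when all the residual weight is present) gives the ratio $(\lambda+\alpha)/(\lambda+1)$. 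The main obstacle is getting this bookkeeping exactly right: pinning down which portion of the rental cost is genuinely unavoidable versus part of the ski-rental tradeoff, handling the first and last requests to each file and time steps with no requests, and confirming the decomposition of $\OPT$ is valid (i.e., that $\OPT$ really does treat files independently when the cache is infinite). Once the decomposition and the ``mandatory $\lambda$'' split are in place, the ratio falls out of the per-gap inequality by linearity.
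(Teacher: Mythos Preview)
Your approach is essentially the paper's: decompose per file into phases (gaps between consecutive requests), observe that the first step of each phase contributes an unavoidable rental $\lambda$ to both the algorithm and $\OPT$, and treat the remainder of the phase as a ski-rental instance with per-step rent $\lambda$ and buy cost $cost(f)$, on which $\ALGSR$ is $\alpha$-competitive; the per-phase bound $\frac{\lambda+\alpha\cdot\OPTSR}{\lambda+\OPTSR}$ then yields the claimed ratio. One slip to fix: in your second paragraph you invert the mapping --- in your own setup ``buy'' corresponds to \emph{evicting} (committing to pay $cost(f)$ at the next request), so when $\ALGSR$ buys you should evict, not ``keep the file the whole time''; the paper states this as ``if $\ALGSR$ buys $f$ \ldots\ $\ALGINF$ evicts it.''
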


\begin{proof}
	Consider any file $f$. We define a phase as follows. A phase starts with a request to $f$ and ends at the time step just before the next request to $f$. When a $f$ is requested, it is either already in the cache or it is retrieved and added to the cache. Thus, once a phase starts, the file must be present in the cache and the earliest this file can be evicted from the cache is at the next time step.
	
	Such a phase, excluding the first step, reduces to ski-rental as follows. The cost of renting is $\lambda$ and cost of buying is the cost of eviction, which is $cost(f)$. The algorithm doesn't know when the phase ends and at each time step it has to decide if it keep renting the file or if it should pay for the eviction cost to buy it.

	The algorithm $\ALGINF$ for rental caching with infinite cache does the following. For a request to file $f_t$ at time $t$, $\ALGINF$ brings the file into the cache. Starting at the next time step, it simulates $\ALGSR$ on $f_t$ to decide for how long it keeps the file in the current phase. If $\ALGSR$ buys $f$ at any time step during the phase, $\ALGINF$ evicts it from the cache at that step. The total rental cost of $\ALGINF$ is same as the total rental cost of $\ALGSR$ and the total eviction cost of $\ALGINF$ is equal to the total cost of buying for $\ALGSR$.

	Let $\OPTSR$ be the optimal cost of the ski-rental problem. In a phase, $\ALGINF$ cost is at most $\lambda + \alpha \cdot \OPTSR$ and the optimal cost is $\OPTINF = \lambda + \OPTSR$. The the competitive ratio of this algorithm is at most $\frac{\lambda + \alpha \cdot \OPTSR}{\lambda + \OPTSR}$. Since $\alpha > 1$, the competitive ratio is at most $\alpha$.

\end{proof}

\begin{corollary}
	\label{cor:rent-infinite-deterministic}
	$\ALGINF$ is a $2$-competitive deterministic algorithm for rental caching with infinite cache.
\end{corollary}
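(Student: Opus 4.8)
The plan is to instantiate Theorem~\ref{thm:rent-inf-cache} with the classical deterministic break-even algorithm for ski-rental. First I would recall that the \emph{break-even} strategy for ski-rental---keep renting until the accumulated rental payments reach the buying cost $B$, and buy at that step---is $2$-competitive. Indeed, if the season lasts $\ell$ steps and $\ell\lambda \le B$, then both the break-even strategy and $\OPTSR$ pay exactly $\ell\lambda$; and if $\ell\lambda > B$, then $\OPTSR = B$ while the strategy pays less than $B$ in rent before the step at which it buys, plus $B$ for the purchase, i.e.\ at most $2B - \lambda \le 2\,\OPTSR$. Taking $\ALGSR$ to be this strategy thus gives $\alpha = 2$ (one can even take $\alpha = 2 - \lambda/B$, but $\alpha = 2$ suffices).

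Next I would apply Theorem~\ref{thm:rent-inf-cache} with this $\ALGSR$ and $\alpha = 2$. The theorem yields the algorithm $\ALGINF$ described in its proof (bring $f_t$ into the cache on request, then simulate $\ALGSR$ on $f_t$ within the current phase to decide when to evict), and guarantees competitive ratio at most $\frac{\lambda + \alpha}{\lambda + 1} = \frac{\lambda + 2}{\lambda + 1} = 1 + \inv{\lambda + 1}$. Since $\lambda > 0$, this quantity is strictly less than $2$, so $\ALGINF$ is $2$-competitive, which is exactly the statement of the corollary.

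I do not anticipate a genuine obstacle here: the only real content is invoking the standard ski-rental upper bound and checking the arithmetic $\frac{\lambda+2}{\lambda+1} \le 2$. The single point deserving a sentence of care is that Theorem~\ref{thm:rent-inf-cache} is phrased per file and per phase, so one should note that summing the per-phase guarantee over all phases of all files preserves the ratio (the additive constant in the competitive-analysis definition absorbs the first-step retrieval cost of each phase's opening request, as already handled in the proof of Theorem~\ref{thm:rent-inf-cache}).
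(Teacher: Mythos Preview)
Your proposal is correct and follows the same approach as the paper: invoke the classical $2$-competitive deterministic ski-rental algorithm and plug $\alpha = 2$ into Theorem~\ref{thm:rent-inf-cache}. The paper's proof is the one-line version of exactly this argument (citing \citep{karlin1988competitive} and Theorem~\ref{thm:rent-inf-cache}); your write-up simply unpacks the ski-rental bound and the arithmetic $\frac{\lambda+2}{\lambda+1}\le 2$ explicitly.
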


\begin{proof}
	The $2$-competitive deterministic algorithm for ski-rental \citep{karlin1988competitive} and Theorem \ref{thm:rent-inf-cache} together imply that $\ALGINF$ is $2$-competitive.
\end{proof}

\begin{corollary}
	\label{cor:rent-infinite-randomized}
	There is a $(\frac{e}{e-1})$-competitive randomized algorithm for rental caching with infinite cache.
\end{corollary}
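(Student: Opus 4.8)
The plan is to instantiate Theorem \ref{thm:rent-inf-cache} with a known randomized ski-rental algorithm. The classical randomized algorithm for ski-rental (see \citep{karlin1988competitive}) has expected cost at most $\frac{e}{e-1}$ times the optimal cost against an oblivious adversary, with no additive constant (in the standard continuous analysis). Setting $\alpha = \frac{e}{e-1}$ in Theorem \ref{thm:rent-inf-cache} produces an algorithm $\ALGINF$ for rental caching with infinite cache whose competitive ratio is at most $\frac{\lambda + \alpha}{\lambda + 1} \le \alpha = \frac{e}{e-1}$, the last inequality using $\alpha > 1$. So at the top level the corollary is a one-line substitution; the work is in checking that the reduction, which in the proof of Theorem \ref{thm:rent-inf-cache} is phrased for a deterministic $\ALGSR$, survives the passage to randomization.

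First I would note that $\ALGINF$ runs an independent copy of the randomized $\ALGSR$ on each phase of each file. Because the adversary is oblivious, the request sequence $\sigma$ — and hence the decomposition of the timeline into phases for every file — is fixed before any coin is flipped, so the phase structure is deterministic and only the within-phase eviction times are random. Then for a single phase, the expected retrieval-plus-rental cost of $\ALGINF$ equals $\lambda$ (for the forced first retrieval) plus the expected cost of $\ALGSR$ on that phase's ski-rental instance, which is at most $\lambda + \alpha\cdot\OPTSR$, where $\OPTSR$ is the optimum of that instance. Summing over all phases of all files and applying linearity of expectation, the expected cost of $\ALGINF$ is at most $\sum(\lambda + \alpha\cdot\OPTSR) \le \alpha\sum(\lambda + \OPTSR) = \alpha\cdot\OPTINF$, where the inequality uses $\alpha \ge 1$, hence $\lambda + \alpha\cdot\OPTSR \le \alpha(\lambda + \OPTSR)$. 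Since the randomized ski-rental guarantee carries no additive term, neither does this one.

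The main (and only) obstacle is the adversary-model bookkeeping: one must confirm that the ski-rental instance handed to each copy of $\ALGSR$ is itself generated by an oblivious adversary, i.e., its length does not depend on $\ALGSR$'s (or $\ALGINF$'s) randomness. This holds precisely because the caching adversary commits to $\sigma$ in advance, and it is exactly the regime in which the $\frac{e}{e-1}$ ski-rental bound is valid. With that observation recorded, the corollary follows immediately from Theorem \ref{thm:rent-inf-cache}.
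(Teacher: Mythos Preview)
Your proposal is correct and follows essentially the same approach as the paper: invoke the $(\tfrac{e}{e-1})$-competitive randomized ski-rental algorithm of \citep{karlin1988competitive} and plug $\alpha = \tfrac{e}{e-1}$ into Theorem~\ref{thm:rent-inf-cache}. The paper's own proof is a one-line citation; your additional care in verifying that the oblivious-adversary model is preserved under the per-phase reduction (so that the randomized ski-rental guarantee legitimately applies to each instance) is a detail the paper leaves implicit.
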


\begin{proof}
	The $(\frac{e}{e-1})$-competitive randomized algorithm for ski-rental \citep{karlin1988competitive} and Theorem \ref{thm:rent-inf-cache} together imply that $\ALGINF$ is $(\frac{e}{e-1})$-competitive.
\end{proof}

\begin{theorem}
	\label{thm:high-rent-cache}
	If there is an $\alpha$-competitive algorithm $\ALGSR$ for ski-rental, then there is an $\alpha$-competitive algorithm for rental paging when $\lambda \ge \inv{k}$.
\end{theorem}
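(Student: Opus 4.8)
The plan is to run the infinite-cache algorithm $\ALGINF$ of Theorem~\ref{thm:rent-inf-cache} unchanged and argue that, when $\lambda \ge \inv{k}$, it never needs more than $k$ cache slots, so it is automatically a legal algorithm for rental paging with a size-$k$ cache and its competitive ratio carries over verbatim. Recall that $\ALGINF$ treats each phase of a page --- the stretch from one request to that page up to the step before its next request --- as an independent ski-rental instance with rent $\lambda$ and buy cost $cost(f)=1$, simulates $\ALGSR$ on it, and evicts the page exactly when $\ALGSR$ buys; so the number of consecutive steps $\ALGINF$ keeps a page after a request equals the number of days $\ALGSR$ rents on that instance.

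First I would arrange that $\ALGSR$ always buys by day $\lceil \inv{\lambda} \rceil$. This is essentially free: renting for $\lceil \inv{\lambda} \rceil$ days costs at least the buy cost $1$, so on any season of length $\ge \lceil \inv{\lambda} \rceil$ we have $\OPTSR = 1$ while buying on day $\lceil \inv{\lambda} \rceil$ costs strictly less than $2$; hence truncating the buy time to $\lceil \inv{\lambda} \rceil$ keeps the competitive ratio at most $\max(\alpha,2)$, which is $\alpha$ in the deterministic case, and in the randomized case the optimal ski-rental algorithm of \citep{karlin1988competitive} already has all its buy mass on days $1,\ldots,\lceil \inv{\lambda} \rceil$, so nothing needs to be changed. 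Since $\lambda \ge \inv{k}$ forces $\lceil \inv{\lambda} \rceil \le k$, with this $\ALGSR$ the algorithm $\ALGINF$ evicts every page within $k$ time steps of its latest request.

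Next I would verify the cache constraint by counting: at any step $s$, a page that $\ALGINF$ still holds must have been requested at one of the steps $s-k+1,\ldots,s$ (else it would have been evicted), and since at most one page is requested per step, at most $k$ pages are held. Thus $\ALGINF$ stays within the cache bound, and this run of $\ALGINF$ is identical to its run with an unbounded cache; by Theorem~\ref{thm:rent-inf-cache} that run costs at most $\alpha \cdot \OPTINF + c$, and since shrinking the cache to size $k$ can only raise the offline optimum we have $\OPTINF \le \OPT$, giving an $\alpha$-competitive algorithm for rental paging when $\lambda \ge \inv{k}$ (and, via Corollaries~\ref{cor:rent-infinite-deterministic} and~\ref{cor:rent-infinite-randomized}, concrete ratios $2$ and $\frac{e}{e-1}$).

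The one place that needs care is the first step. For an arbitrary $\alpha$-competitive $\ALGSR$ with $\alpha < 2$, naive truncation of the buy time can inflate the ratio to $2$ because of seasons of length exactly $\lceil \inv{\lambda} \rceil$, so the argument has to use that the \emph{optimal} ski-rental algorithm --- not just some $\alpha$-competitive one --- has bounded buy time; with that in hand, everything else is immediate from Theorem~\ref{thm:rent-inf-cache} and the one-request-per-step bound.
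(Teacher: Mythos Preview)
The paper omits the proof of this theorem, so there is nothing to compare against directly. Your approach---run $\ALGINF$ and use $\lambda \ge \inv{k}$ to show that at most $k$ pages are ever held simultaneously---is the natural one and is essentially correct: the counting argument (each cached page was requested in the last $k$ steps, and at most one page is requested per step) and the comparison $\OPTINF \le \OPT$ are both fine.

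The one loose end is your handling of truncation. You bound the truncated algorithm's ratio only by $\max(\alpha,2)$ and then retreat in the last paragraph to say the general statement requires the \emph{optimal} ski-rental algorithm rather than an arbitrary $\alpha$-competitive one. In fact truncation preserves $\alpha$-competitiveness for any $\ALGSR$. On seasons shorter than $\lceil \inv{\lambda} \rceil$ nothing changes. On any season of length $L \ge \lceil \inv{\lambda} \rceil$ the truncated algorithm's expected cost is independent of $L$ and equals
\[
\sum_{d} p_d \bigl( (\min(d,\lceil \tfrac{1}{\lambda} \rceil)-1)\lambda + 1 \bigr)
\;\le\;
\sum_{d} p_d \bigl( (d-1)\lambda + 1 \bigr),
\]
term by term. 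The right-hand side is exactly the limit of the original algorithm's expected cost as $L \to \infty$, hence at most $\alpha \cdot 1 = \alpha \cdot \OPTSR(L)$. (Any $\alpha$-competitive ski-rental algorithm must buy with probability one and with finite expected buy day, or its cost on long seasons would be unbounded, so the sum converges and the limit argument is valid.) Thus the theorem holds for every $\alpha$-competitive $\ALGSR$, not only for the specific optimal ones, and your construction proves it in full once this sharper truncation bound replaces $\max(\alpha,2)$.
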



\begin{corollary}
	\label{cor:high-rent-cache-deterministic}
	When $\lambda \ge \inv{k}$, there is a $2$-competitive deterministic algorithm for rental caching.
\end{corollary}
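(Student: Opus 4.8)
The plan is to obtain the corollary simply by instantiating Theorem~\ref{thm:high-rent-cache} with a concrete ski-rental algorithm. The classical ``break-even'' strategy for ski-rental with per-step rental cost $\lambda$ and buy cost $B$ --- rent for the first $\lceil B/\lambda \rceil - 1$ steps and buy on the step after that --- is deterministic and $2$-competitive (indeed $(2-\lambda/B)$-competitive), and this is optimal for deterministic ski-rental \citep{karlin1988competitive}; it is the same algorithm already invoked in Corollary~\ref{cor:rent-infinite-deterministic}. Taking $\ALGSR$ to be this strategy and $\alpha = 2$ in Theorem~\ref{thm:high-rent-cache} immediately gives a $2$-competitive deterministic algorithm in the regime $\lambda \ge \inv{k}$, which is exactly the hypothesis of the corollary. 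Unwinding the reduction, the algorithm is: on a request to $f$, retrieve $f$ if absent, and then during $f$'s current phase run break-even ski-rental with buy cost $cost(f)$ and per-step cost $\lambda$, evicting $f$ at precisely the step the simulation would ``buy''.

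It is worth noting how this relates to the earlier results. Corollary~\ref{cor:rent-infinite-deterministic} is the infinite-cache analogue (no size constraint), and Corollary~\ref{cor:rental-fault-model-cilp}(a) already gives a $2$-competitive deterministic bound in the $\lambda \ge \inv{k}$ regime for the fault model via \rentalcachingcilp; the present statement covers general file sizes and retrieval costs, where the naive CILP bound degrades to $k$ (Theorem~\ref{thm:retnal-caching-cilp}) because a high-cost file can remain resident for many steps before its $x$-variable reaches $1$ and so the algorithm may be forced to do work on $k$-variable cache-size constraints. The ski-rental route sidesteps this.

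Accordingly, I do not expect any real obstacle in the corollary itself: once Theorem~\ref{thm:high-rent-cache} is in hand it is a one-line substitution. The substantive content lives in Theorem~\ref{thm:high-rent-cache}, namely that when $\lambda \ge \inv{k}$ the per-file ski-rental policy never forces an eviction ``for cache-size reasons'': one argues that the simulated ski-rental discards each file quickly enough --- relative to how many distinct files can be simultaneously resident --- that the size-$k$ constraint is maintained automatically, so the per-file simulations can be run independently without interaction. (If one reads the corollary's ``rental caching'' strictly against the theorem's ``rental paging'', the only gap to close is checking that this cache-size argument, together with the cost bookkeeping in the proof of Theorem~\ref{thm:rent-inf-cache} --- $\ALGINF$'s rental cost equals $\ALGSR$'s rental cost and $\ALGINF$'s eviction cost equals $\ALGSR$'s buy cost --- goes through verbatim with arbitrary $size(\cdot)$ and $cost(\cdot)$, which it does.)
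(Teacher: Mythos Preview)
Your main move---plug the $2$-competitive break-even ski-rental strategy into Theorem~\ref{thm:high-rent-cache}---is exactly what the paper has in mind; the corollary is stated without proof precisely because it is this one-line instantiation, parallel to Corollary~\ref{cor:rent-infinite-deterministic}.

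Where you go beyond the paper is in trying to reconcile the wording mismatch (Theorem~\ref{thm:high-rent-cache} says \emph{rental paging}, the corollary says \emph{rental caching}), and here your final parenthetical claim is not right. You assert that the cache-size argument ``goes through verbatim with arbitrary $size(\cdot)$ and $cost(\cdot)$'', but it does not. The per-file break-even ski-rental with buy cost $cost(f)$ and rent $\lambda$ keeps $f$ resident for roughly $cost(f)/\lambda$ steps; when $cost(f)>1$ this can far exceed $k$ even under the hypothesis $\lambda\ge 1/k$. Concretely, take $k=2$, $\lambda=1/2$, and unit-size files $f_1,f_2,f_3$ each with $cost(f_i)=10$: each file wants to stay for $19$ steps, so after the third request all three are simultaneously resident in a cache of size $2$. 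Arbitrary sizes cause the analogous overflow even when all costs are $1$. So the independent-simulation argument that underlies Theorem~\ref{thm:high-rent-cache} really does use the paging assumptions (unit size, unit cost), and the extension you sketch would need an additional eviction rule to enforce the size constraint---at which point the clean $\alpha$-for-$\alpha$ transfer from ski-rental is no longer immediate. (This is consistent with Table~1, which lists $k$, not $2$, as the deterministic upper bound for weighted rental paging and rental caching in the $\lambda\ge 1/k$ regime; the corollary as worded appears to be looser than the paper's own summary table.)

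In short: your derivation of the corollary from Theorem~\ref{thm:high-rent-cache} is the intended one and is fine; drop or qualify the claim that it extends verbatim to general $size(\cdot)$ and $cost(\cdot)$.
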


\begin{corollary}
	\label{cor:high-rent-cache-randomized}
	When $\lambda \ge \inv{k}$, there is a $(\frac{e}{e-1})$-competitive randomized algorithm for rental caching.
\end{corollary}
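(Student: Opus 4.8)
The corollary is immediate from Theorem~\ref{thm:high-rent-cache} once a randomized ski-rental algorithm is plugged in, mirroring exactly how Corollary~\ref{cor:high-rent-cache-deterministic} uses the deterministic one. The plan is: (i) recall that \citet{karlin1988competitive} give a $\frac{e}{e-1}$-competitive randomized algorithm $\ALGSR$ for ski-rental (it buys at a random time drawn from the distribution on $\{1,\dots,\lceil B/\lambda\rceil\}$ with weight roughly proportional to $(1+\lambda/B)^{t}$, where $B$ is the buy cost); and (ii) set $\alpha=\frac{e}{e-1}$ in Theorem~\ref{thm:high-rent-cache}, which turns an $\alpha$-competitive ski-rental algorithm into an $\alpha$-competitive algorithm for rental paging whenever $\lambda\ge\inv{k}$. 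Since Theorem~\ref{thm:high-rent-cache} is stated for rental \emph{paging}, obtaining the corollary's claim about rental \emph{caching} also requires checking that the same reduction survives the passage to arbitrary file sizes and costs — the one nontrivial point, discussed below.

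If one wants the argument self-contained, the substance lives in the reduction behind Theorem~\ref{thm:high-rent-cache}, which I would reconstruct as follows. For each file $f$, chop the request sequence into phases (the interval between consecutive requests to $f$); on each phase run an independent copy of $\ALGSR$ with buy cost $cost(f)$, and evict $f$ exactly when that copy ``buys''. As in the proof of Theorem~\ref{thm:rent-inf-cache}, the expected cost on a phase is at most $\lambda+\alpha\cdot\OPTSR$ for that phase while the optimum pays at least $\lambda+\OPTSR$ there; summing over all phases, using linearity of expectation for the numerator and the elementary bound $\frac{\sum_i a_i}{\sum_i b_i}\le\max_i\frac{a_i}{b_i}$, gives a global ratio at most $\alpha$. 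What makes $\lambda\ge\inv{k}$ essential is that $\ALGSR$ evicts $f$ within $\lceil B/\lambda\rceil\le k$ steps of its last request (using $B=1$ in the fault/paging case and $\lambda\ge\inv{k}$), so, since at most one file is requested per time step, the cache only ever holds files last requested in the previous $\le k$ steps — at most $k$ of them — and the per-phase simulations never collide with the size-$k$ constraint.

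The step I expect to need the most care is extending this from rental paging to rental caching with arbitrary sizes and costs: when $cost(f)$ exceeds $\lambda k$, the copy of $\ALGSR$ may wish to keep $f$ for more than $k$ steps, and a full cache can force an earlier eviction that the black-box argument does not account for. I would handle this by charging such a forced eviction directly against $\OPT$: a forced eviction of $f$ can occur only after files of total size more than $k-size(f)$ have been requested since $f$'s last request, so $\OPT$ cannot keep $f$ resident over that interval without comparable retrieval or rental cost, and the eviction cost can be absorbed into that part of $\OPT$'s cost. (Alternatively, one restricts to the fault model, where $cost(f)=1$ makes the clean $\lceil 1/\lambda\rceil\le k$ bound apply and the statement lines up with Corollary~\ref{cor:rental-fault-model-cilp}(a).) Pinning down this charging — not the invocation of the ski-rental ratio — is the real work.
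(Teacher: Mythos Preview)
Your first paragraph is exactly the paper's approach: the corollary is stated without proof, placed immediately after Theorem~\ref{thm:high-rent-cache}, and is meant to follow by instantiating $\alpha=\frac{e}{e-1}$ with the randomized ski-rental algorithm of \citet{karlin1988competitive}, just as Corollary~\ref{cor:high-rent-cache-deterministic} instantiates $\alpha=2$. The paper gives no further argument --- in fact it omits the proof of Theorem~\ref{thm:high-rent-cache} itself --- so your reconstruction of the per-file ski-rental reduction and the role of the $\lambda\ge\inv{k}$ hypothesis (to keep at most $k$ recently-requested unit-size pages in cache) is already more than the paper supplies.

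You are also right to flag the paging/caching mismatch: Theorem~\ref{thm:high-rent-cache} is stated for rental \emph{paging}, while both corollaries claim rental \emph{caching}. The paper does not comment on this step, so there is nothing to compare your charging sketch against. Your instinct that the clean ``evict within $\lceil 1/\lambda\rceil\le k$ steps'' argument breaks once $cost(f)$ can exceed $1$ is correct, and the paper simply does not address it. Your proposed fix --- charging a forced eviction of $f$ to $\OPT$'s cost on the interval since $f$'s last request --- is the right kind of idea but, as you acknowledge, is not yet a proof: $\OPT$ could keep $f$ resident and pay only retrieval/rental costs for the \emph{other} files, and those costs need not be comparable to $cost(f)$. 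So the honest summary is that your derivation matches (and exceeds) the paper's for paging, and the extension to general caching is a genuine gap that neither you nor the paper has closed.
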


\subsection{\rentalcachingmeta}

\begin{theorem}
	\label{thm:rent-meta-alg}
	If there is an $\alpha$-competitive algorithm $\ALGSR$ for ski-rental, and a $\beta$-competitive algorithm for caching (no rental cost) $\ALGC$, then there is $(\alpha + \beta)$-competitive algorithm for rental caching.
\end{theorem}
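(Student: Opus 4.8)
The plan is to run the ski-rental algorithm $\ALGSR$ and the caching algorithm $\ALGC$ in parallel on two separate conceptual caches, and maintain the \rentalcachingmeta\ cache as the \emph{union} of the two: a file is kept in the rental cache exactly when it is present in the $\ALGC$-cache \emph{or} $\ALGSR$ has decided to keep renting it. More precisely, for each file $f$, between consecutive requests to $f$ we simulate an independent instance of ski-rental (renting cost $\lambda$ per step, buying cost $cost(f)$, just as in the proof of Theorem \ref{thm:rent-inf-cache}); the file stays in the rental cache until \emph{both} $\ALGSR$ has "bought" (i.e.\ chosen to stop renting $f$) \emph{and} $\ALGC$ has evicted $f$. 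Since $\ALGC$ respects the cache-size bound $k$ on its own cache, and our cache is a superset only by files that $\ALGSR$ is currently renting, we need to be slightly careful about feasibility; the cleanest fix is to let $\ALGC$ manage the size-$k$ cache exactly as it would in the plain caching problem, and treat any file that $\ALGSR$ still wants but $\ALGC$ has evicted as residing in a notional "overflow" area whose rental we still pay — then argue this overflow is charged to the ski-rental bound. I will need to confirm that this still yields a valid schedule for \rentalcachingmeta, or alternatively restrict the clean statement to the infinite-cache / fault-model setting and invoke the size-$k$ reduction separately.

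The key steps, in order: (1) Define the decomposition of \rentalcachingmeta's cost into a retrieval part and a rental part, and describe the simulation precisely as above. (2) Bound the rental cost: each per-file, per-phase rental interval of \rentalcachingmeta\ is contained in the union of the interval $\ALGSR$ rents and the interval $\ALGC$ holds the file, so the total rental paid is at most (rental paid by $\ALGSR$'s simulations) $+$ (rental that would be charged for $\ALGC$'s holdings). (3) Bound the retrieval cost: \rentalcachingmeta\ retrieves $f$ only on a genuine fault, which happens only when $f$ is absent from both sub-caches at a request; charge the retrieval to whichever sub-algorithm also faulted — at least one did, and I will split so retrievals are covered by $\ALGC$'s faults. (4) Compare to $\OPT$ for rental caching: let $\OPT$ have rental cost $\OPT_{\text{rent}}$ and retrieval cost $\OPT_{\text{ret}}$. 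Observe that $\OPT$ induces a feasible ski-rental strategy for each phase with cost at most its contribution, so the sum of $\OPTSR$ values over all phases is at most $\OPT_{\text{rent}} + \OPT_{\text{ret}}$ (renting contributes to rent, buying/eviction to retrieval, roughly), whence $\ALGSR$'s simulated cost is at most $\alpha(\OPT_{\text{rent}} + \OPT_{\text{ret}})$; and $\OPT$ restricted to just the caching problem (ignoring rental) costs at most $\OPT_{\text{ret}} + \OPT_{\text{rent}}$ as well, so $\ALGC \le \beta(\OPT_{\text{ret}}+\OPT_{\text{rent}})$. (5) Add the pieces: $\ALG \le \ALGSR$-cost $+\ \ALGC$-cost $\le (\alpha+\beta)\OPT$, up to the additive constants inherited from the two sub-algorithms.

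The main obstacle I expect is step (2)–(4): making the accounting airtight so that no unit of rental or retrieval cost is double-counted against $\OPT$, and in particular showing that a single offline optimum $\OPT$ can be simultaneously "projected" onto a good ski-rental solution \emph{and} a good plain-caching solution without its cost being charged twice. The trick is that the two projections charge \emph{disjoint} aspects of $\OPT$'s behavior — the ski-rental projection only ever needs to know, per file per phase, how long $\OPT$ kept the file before evicting it and how much that eviction cost, which is exactly the portion of $\OPT$'s cost attributable to that file; since phases of the same file are disjoint and files are disjoint, summing is fine. For the caching projection we simply forget rental costs entirely, and $\OPT$'s eviction pattern is a legal caching solution of cost $\le \OPT_{\text{ret}}$. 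So both projections are bounded by $\OPT$ individually (not by disjoint pieces), giving $\alpha\OPT + \beta\OPT$; the disjointness concern dissolves because we are allowed to bound each simulated algorithm's offline optimum by the \emph{whole} $\OPT$. I should double-check the feasibility subtlety from step (1) and, if it cannot be resolved cleanly for finite $k$, state the theorem's proof via the superset-cache construction together with the observation that $\ALGC$'s evictions already keep the dominant part within size $k$, deferring the boundary handling to a remark.
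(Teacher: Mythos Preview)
Your construction uses the wrong set operation. You keep a file in the meta-cache while it is in the $\ALGC$-cache \emph{or} $\ALGSR$ is still renting it --- the union --- and then struggle with feasibility (the union can exceed size $k$) and with accounting for the rental of files that $\ALGC$ holds after $\ALGSR$ has stopped renting them (that rental is not part of $\ALGC$'s caching cost, so your step~(2) leaves an uncontrolled term). Your proposed ``overflow'' patch does not yield a valid rental-caching schedule: either those overflow files are in the size-$k$ cache (violating the bound) or they are not (and then you should not be paying rental for them, but you \emph{will} pay retrieval on the next request, which you have not charged anywhere).

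The paper's construction takes the \emph{intersection}: the meta-cache holds a file only while it is in \emph{both} $\cache{1}$ (the $\ALGC$ cache of size $k$) and $\cache{2}$ (the infinite cache maintained by $\ALGINF$, which simulates $\ALGSR$ per file). Feasibility is then immediate, since the meta-cache is a subset of $\cache{1}$. The rental cost is at most the rental paid by $\ALGINF$, since the meta-cache is also a subset of $\cache{2}$. And the meta-algorithm evicts a file exactly when at least one of $\ALGC$, $\ALGINF$ evicts it, so each eviction is charged to whichever sub-algorithm evicted. This gives the cost of \rentalcachingmeta\ at most $\alpha\,\OPTSR + \beta\,\OPTC \le (\alpha+\beta)\,\OPT$, the last step holding because dropping the cache-size constraint (for $\OPTSR$) and dropping the rental cost (for $\OPTC$) can only lower the optimum. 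Your step~(4) had this final comparison essentially right; the gap is entirely in the combination rule.
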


We present the \rentalcachingmeta\ algorithm. Our algorithm uses $\ALGSR$ and $\ALGC$ to generate a solution for rental caching. On an input sequence $\sigma$ and cache size $k$, \rentalcachingmeta\ does the following. It simulates $\ALGC$ on the input sequence $\sigma$ and cache $\cache{1}$ of size $k$. In parallel, it simulates $\ALGINF$ on the request sequence $\sigma$ and cache $\cache{2}$ of infinite size. $\ALGINF$ in turn simulates $\ALGSR$ on each request. At any time, the cache of \rentalcachingmeta\ contains the intersection of the files present in caches $\cache{1}$ and $\cache{2}$.

\begin{claim}
	The total size of the items in the cache of \rentalcachingmeta\ never exceeds $k$.
\end{claim}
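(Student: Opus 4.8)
The plan is to reduce the claim to the feasibility of the simulated caching algorithm $\ALGC$. By construction, at every time step the cache of \rentalcachingmeta\ is exactly the set $\cache{1} \cap \cache{2}$, where $\cache{1}$ is the cache maintained by $\ALGC$ (run with cache size $k$) and $\cache{2}$ is the cache maintained by $\ALGINF$. In particular, $\cache{1} \cap \cache{2} \subseteq \cache{1}$, so it suffices to bound the total size of the files in $\cache{1}$.

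First I would record the invariant that, since $\ALGC$ is a (feasible) caching algorithm running with a cache of size $k$, after each of its actions one has $size(\cache{1}) = \sum_{f \in \cache{1}} size(f) \le k$; this is just the correctness requirement of any caching algorithm. Then, because $size(\cdot)$ is non-negative and additive over files, removing the files in $\cache{1} \setminus \cache{2}$ can only decrease the total, giving
\[
	\sum_{f \in \cache{1} \cap \cache{2}} size(f) \;\le\; \sum_{f \in \cache{1}} size(f) \;\le\; k .
\]
The only point that needs to be made explicit is that the phrase ``the cache of \rentalcachingmeta\ contains the intersection of the files present in caches $\cache{1}$ and $\cache{2}$'' is meant as equality, so that the meta cache is literally $\cache{1}\cap\cache{2}$ at all times and never holds any file outside $\cache{1}$; once that is fixed, monotonicity of $size(\cdot)$ under set inclusion closes the argument.

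I do not expect any real obstacle here: the size bound rides entirely on $\cache{1}$, since $\cache{2}$ is infinite and imposes no size constraint. The substantive content of \rentalcachingmeta\ lies in the complementary direction needed for Theorem~\ref{thm:rent-meta-alg} — namely that this intersection cache still serves every request (both $\ALGC$ and $\ALGINF$ fetch the requested file, so it is in $\cache{1}\cap\cache{2}$ immediately after the request) and that the resulting retrieval-plus-rental cost is at most $(\alpha+\beta)\,\OPT$, which is where $\ALGINF$ and $\ALGSR$ actually enter. That analysis is handled separately; the present claim is simply the observation above.
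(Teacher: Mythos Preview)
Your proposal is correct and matches the paper's own proof essentially verbatim: both argue that the meta cache is contained in $\cache{1}$, so its total size is bounded by $size(\cache{1}) \le k$ since $\ALGC$ is a feasible caching algorithm on a cache of size $k$. The extra remarks about serving requests and the cost analysis are accurate context but not part of this claim's proof.
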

	
\begin{proof}
	Total size of all items in the cache of $\ALGC$ is at least the total size of all items in the cache of \rentalcachingmeta. This proves our claim, because $\ALGC$ maintains the invariant that the total size of items in the cache is at most $k$.
\end{proof}

\begin{claim}
	$E[\text{\rentalcachingmeta}] \le E[\ALGSR] + E[\ALGC]$
\end{claim}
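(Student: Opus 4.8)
The plan is to prove the stronger, realization-wise bound $\text{\rentalcachingmeta}(\sigma)\le\ALGSR(\sigma)+\ALGC(\sigma)$ on every request sequence $\sigma$ (for every fixing of the random bits of the simulated algorithms), from which the claim follows by monotonicity of expectation. I would split the cost of \rentalcachingmeta\ into its rental cost and its retrieval cost and bound the two parts separately. Write $\cache{1}$ for the cache of the simulated $\ALGC$ and $\cache{2}$ for the cache of the simulated $\ALGINF$, so that the cache of \rentalcachingmeta\ is $\cache{1}\cap\cache{2}$ at all times.

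For the rental cost: at every time step the set of files held by \rentalcachingmeta\ is contained in the set held by $\ALGINF$, since $\cache{1}\cap\cache{2}\subseteq\cache{2}$; hence the per-step rental paid by \rentalcachingmeta\ is at most that paid by $\ALGINF$, and summing over all steps its total rental cost is at most that of $\ALGINF$. By the construction in the proof of Theorem~\ref{thm:rent-inf-cache}, in which $\ALGINF$ drives $\ALGSR$ on each file, the total rental cost of $\ALGINF$ equals the total renting cost of $\ALGSR$.

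For the retrieval cost I would use a charging argument. \rentalcachingmeta\ pays a retrieval cost at time $t$ only when $f_t\notin\cache{1}\cap\cache{2}$, i.e.\ when $f_t\notin\cache{1}$ or $f_t\notin\cache{2}$. In the first case $\ALGC$ also faults on this request and pays at least $cost(f_t)$, so charge \rentalcachingmeta's retrieval to $\ALGC$; otherwise $f_t\in\cache{1}$ but $f_t\notin\cache{2}$, so $\ALGINF$ faults and pays at least $cost(f_t)$, and we charge the retrieval to $\ALGINF$. Since there is at most one request per time step, at each step at most one retrieval of \rentalcachingmeta\ is charged to $\ALGC$ and at most one to $\ALGINF$, and each charge is at most what the charged algorithm itself pays on that same request; hence the total charged to $\ALGC$ is at most $\ALGC$'s whole cost (plain caching has no rental term) and the total charged to $\ALGINF$ is at most $\ALGINF$'s retrieval cost, which by the same correspondence in the proof of Theorem~\ref{thm:rent-inf-cache} equals $\ALGSR$'s total buying cost.

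Combining the two bounds gives $\text{\rentalcachingmeta}(\sigma)\le(\text{renting cost of }\ALGSR)+(\text{buying cost of }\ALGSR)+\ALGC(\sigma)=\ALGSR(\sigma)+\ALGC(\sigma)$, and taking expectations proves the claim. The only delicate point is the charging step: I must check that every retrieval of \rentalcachingmeta\ is assigned to exactly one simulated algorithm that genuinely faults on the same request, so nothing is double-counted, and I must line up the rental half of $\ALGINF$'s cost with $\ALGSR$'s renting and the retrieval half with $\ALGSR$'s buying exactly as the proof of Theorem~\ref{thm:rent-inf-cache} sets them up. Everything else is immediate from the containments $\cache{1}\cap\cache{2}\subseteq\cache{1}$ and $\cache{1}\cap\cache{2}\subseteq\cache{2}$ together with monotonicity of expectation.
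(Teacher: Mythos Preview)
Your argument is correct and follows essentially the same charging approach as the paper: the rental cost is dominated by that of $\ALGINF$ (equivalently $\ALGSR$) via the containment $\cache{1}\cap\cache{2}\subseteq\cache{2}$, and each fault of \rentalcachingmeta\ is charged to whichever simulated algorithm also faults (or evicts). The only cosmetic difference is that the paper does the charging at \emph{eviction} time (``\rentalcachingmeta\ evicts a file when at least one of $\ALGSR$ and $\ALGC$ evicted the file''), whereas you do it at \emph{retrieval} time; these are equivalent accountings of the same cost, and your version is in fact a bit more explicit about why nothing is double-counted.
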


\begin{proof}
	\rentalcachingmeta\ evicts a file, when at least one of $\ALGSR$ and $\ALGC$ evicted the file. For each eviction, we charge the cost of eviction for \rentalcachingmeta\ to the algorithm that evicted the file, breaking ties arbitrarily. We charge the rental cost of \rentalcachingmeta\ to the rental cost of $\ALGSR$. This proves our claim.
\end{proof}

Also, $E[\ALGSR] \le \alpha \cdot \OPTSR \le \alpha \cdot \OPT$, and $E[\ALGC] \le \beta \cdot \OPTC \le \beta \cdot \OPT$, where $\OPTSR$ denotes the optimal cost for rental caching with infinite cache, $\OPTC$ denotes the optimal cost for caching, and $OPT$ denotes the optimal cost for rental caching. So, $E[\text{\rentalcachingmeta}] \le (\alpha + \beta) \OPT$, and hence, \rentalcachingmeta\ is $(\alpha + \beta)$-competitive algorithm for rental caching.

If both $\ALGC$ and $\ALGSR$ are deterministic, \rentalcachingmeta\ is also deterministic, otherwise it is randomized. Theorem \ref{thm:rent-meta-alg} implies the following two corollaries.
	
\begin{corollary}
	The $2$-competitive deterministic online algorithm for ski-rental \citep{karlin1988competitive} and the $k$-competitive deterministic online algorithm for caching \citep{sleator1985amortized}, give a $(k + 2)$-competitive deterministic online algorithm for rental caching.
\end{corollary}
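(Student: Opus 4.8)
The plan is to obtain the statement as an immediate instantiation of Theorem~\ref{thm:rent-meta-alg}, which already does all the substantive work. Recall that that theorem, given an $\alpha$-competitive ski-rental algorithm $\ALGSR$ and a $\beta$-competitive caching algorithm $\ALGC$, constructs \rentalcachingmeta: it simulates $\ALGC$ on a size-$k$ cache $\cache{1}$ and $\ALGINF$ (driven by $\ALGSR$) on an infinite cache $\cache{2}$, and keeps the intersection $\cache{1}\cap\cache{2}$ as its own cache. The two claims preceding the theorem establish that this never exceeds the size bound $k$ and that its (expected) cost is at most $E[\ALGSR]+E[\ALGC]$, and the paragraph after the claims bounds those two terms by $\alpha\cdot\OPTSR\le\alpha\cdot\OPT$ and $\beta\cdot\OPTC\le\beta\cdot\OPT$ respectively.

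So the only thing I would do is plug in concrete building blocks. I would take $\ALGSR$ to be the deterministic $2$-competitive ski-rental algorithm of \citet{karlin1988competitive}, giving $\alpha=2$, and take $\ALGC$ to be the deterministic $k$-competitive caching algorithm of \citet{sleator1985amortized}, giving $\beta=k$. Theorem~\ref{thm:rent-meta-alg} then yields an $(\alpha+\beta)=(k+2)$-competitive algorithm for rental caching. I would then record the determinism: as noted right after Theorem~\ref{thm:rent-meta-alg}, when both $\ALGC$ and $\ALGSR$ are deterministic, no randomness enters either the eviction decisions on $\cache{1}$ or the buy-versus-rent decisions on $\cache{2}$, so \rentalcachingmeta\ is deterministic and the cost bound holds without expectations. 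Hence this instantiation is a deterministic $(k+2)$-competitive online algorithm for rental caching.

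There is essentially no obstacle here, since the corollary is a direct specialization of an already-proved theorem. The one point I would double-check is that the cited $k$-competitive result applies to the full file-caching model (arbitrary sizes and costs), not merely to paging; if a self-contained reference is preferred, one can use \rentalcachingcilp\ of Theorem~\ref{thm:retnal-caching-cilp} without its rental component, or the {\sc Landlord} algorithm of \citep{young02online}, in place of $\ALGC$, still with $\beta=k$, and the conclusion is unchanged.
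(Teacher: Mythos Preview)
Your proposal is correct and matches the paper's approach exactly: the paper simply states that Theorem~\ref{thm:rent-meta-alg} implies this corollary, and you instantiate it with $\alpha=2$ and $\beta=k$ together with the observation that determinism of both subroutines makes \rentalcachingmeta\ deterministic. Your closing caveat is well taken---the citation \citep{sleator1985amortized} indeed covers only paging, so pointing to {\sc Landlord} \citep{young02online} or Theorem~\ref{thm:retnal-caching-cilp} for the general file-caching $\beta=k$ bound is a sensible clarification that the paper itself glosses over.
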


\begin{corollary}
	The $(\frac{e}{e-1})$-competitive randomized online algorithm for ski-rental \citep{karlin1988competitive} and the $H_k$-competitive randomized online algorithm for caching \citep{mcgeoch1991strongly,achlioptas2000competitive}, give a $(H_k + \frac{e}{e-1})$-competitive randomized online algorithm for rental caching.
\end{corollary}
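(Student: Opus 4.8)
The plan is to invoke Theorem~\ref{thm:rent-meta-alg} with concrete choices for the two black-box subroutines. I would take $\ALGSR$ to be the $\frac{e}{e-1}$-competitive randomized algorithm for ski-rental of \citep{karlin1988competitive}, so that $\alpha = \frac{e}{e-1}$, and take $\ALGC$ to be the $H_k$-competitive randomized algorithm for caching of \citep{mcgeoch1991strongly,achlioptas2000competitive}, so that $\beta = H_k$. Plugging these into the \rentalcachingmeta\ construction, Theorem~\ref{thm:rent-meta-alg} immediately yields an $(\alpha + \beta) = (H_k + \frac{e}{e-1})$-competitive algorithm for rental caching. Since at least one of the two subroutines (in fact both) is randomized, the composed algorithm is randomized, as claimed.

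The single point worth spelling out is that the analysis of \rentalcachingmeta\ goes through unchanged when the subroutines are randomized, and this is already anticipated in the statement and proof of Theorem~\ref{thm:rent-meta-alg}: the two structural claims about \rentalcachingmeta\ are phrased in expectation, giving $E[\text{\rentalcachingmeta}] \le E[\ALGSR] + E[\ALGC]$, and $\alpha$- (resp.\ $\beta$-) competitiveness of a randomized algorithm is exactly the bound $E[\ALGSR] \le \alpha \cdot \OPTSR$ (resp.\ $E[\ALGC] \le \beta \cdot \OPTC$) on expected cost. Combining these with $\OPTSR \le \OPT$ and $\OPTC \le \OPT$ — any feasible rental-caching schedule is feasible for the infinite-cache relaxation, and, after discarding its rental charges, for plain caching — gives $E[\text{\rentalcachingmeta}] \le (H_k + \frac{e}{e-1}) \OPT$.

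I expect no real obstacle here: the corollary is a direct specialization of Theorem~\ref{thm:rent-meta-alg}, and the only care needed is to confirm that the cited $H_k$ bound is being applied to the caching model at hand (rather than only to paging) and that the two simulations can indeed be run in parallel on the same request sequence, which the \rentalcachingmeta\ description already guarantees.
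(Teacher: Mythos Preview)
Your proposal is correct and matches the paper's approach exactly: the paper gives no separate proof for this corollary, merely stating that it (and the preceding deterministic corollary) follow directly from Theorem~\ref{thm:rent-meta-alg}, which is precisely what you do by instantiating $\alpha=\tfrac{e}{e-1}$ and $\beta=H_k$. Your added remarks about expectations and about $\OPTSR,\OPTC\le\OPT$ simply make explicit what the proof of Theorem~\ref{thm:rent-meta-alg} already contains.
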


\subsection{Lower bounds}

\begin{theorem}
	The competitive ratio of any deterministic algorithm for rental paging is at least (a) $2$ when $\lambda > \inv{k}$, and (b) $\frac{k + k \lambda}{1 + k^2 \lambda}$ when $\lambda \le \inv{k}$.
\end{theorem}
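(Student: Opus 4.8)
The plan is to prove the two parts with two separate adversary constructions, each tailored to the regime of the matching upper bound. Part (a), $\lambda>\inv k$, is essentially the classical deterministic ski‑rental lower bound embedded in paging, and does not use the cache size: I would work with a single page $p$ (and any cache size $k$). The adversary produces a sequence of \emph{intervals}. An interval begins with a request to $p$, is followed by empty steps (which the model allows), and ends the instant $\ALG$ evicts $p$, at which moment the adversary immediately requests $p$ again and starts the next interval. If $\ALG$ never evicts $p$, the adversary issues empty steps forever: $\ALG$'s rental grows without bound while the offline cost stays $O(1)$ per interval, so the ratio is unbounded. Otherwise, in an interval in which $\ALG$ holds $p$ for $b$ idle steps before evicting, $\ALG$ pays $b\lambda$ rental in that interval plus a retrieval of cost $1$ at the start of the next interval, i.e.\ $b\lambda+1$; meanwhile the offline optimum, which knows the interval length, pays only $\min\big((b+1)\lambda,\,1\big)$ — keep $p$ throughout, or evict it at once and re‑fetch.

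The key step is then the per‑interval comparison $b\lambda+1\ \ge\ (2-\lambda)\cdot\min\big((b+1)\lambda,\,1\big)$, which I would verify by splitting on whether $(b+1)\lambda\le 1$ (in each case the inequality reduces to $(1-\lambda)^2\ge 0$ or to the hypothesis of the case); the point is that $\ALG$ is forced to pay \emph{both} the rent up to its break‑even point \emph{and} the re‑fetch, whereas $\OPT$ pays the smaller of the two. Summing over many intervals makes the additive constant negligible and yields a lower bound of essentially $2$ (precisely $2-\lambda$, the value in the table). The only care needed here is the amortized bookkeeping of the interval‑boundary retrievals, so that each such retrieval is charged to exactly one interval rather than appearing as a fixed additive loss; once that is set up, part (a) is routine.

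For part (b), $\lambda\le\inv k$, I would run the standard paging lower‑bound construction: fix $k+1$ pages, make one request per time step, and at each step request a page not currently in $\ALG$'s cache (such a page exists since the cache holds at most $k$ of the $k+1$ pages; when $\ALG$ keeps fewer than $k$ pages there is a choice, resolved exactly as in the classical "furthest‑in‑future hole" argument). After $n$ requests, $\ALG$ has faulted on every request, so its retrieval cost is $n$; and the requested page is in $\ALG$'s cache during the step it is served, so $\ALG$ pays at least $n\lambda$ in rental, giving $\ALG\ge n(1+\lambda)$. Since the whole sequence uses only $k+1$ pages, the classical analysis exhibits an offline schedule (Belady's rule) faulting at most $n/k+O(1)$ times; that schedule keeps at most $k$ pages, hence rents at most $nk\lambda$. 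Therefore $\OPT\le \frac nk(1+k^2\lambda)+O(1)$, and as $n\to\infty$, $\ALG/\OPT\to \frac{k(1+\lambda)}{1+k^2\lambda}=\frac{k+k\lambda}{1+k^2\lambda}$ (the same inequality holds for larger $\lambda$ too, but it is the strongest available bound only for $\lambda\le\inv k$).

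The main obstacle is in part (b): unlike a classical paging algorithm, the rental algorithm need not be "lazy" — it may evict pages to save rent and so keep a fluctuating, possibly small, number of pages. I would need to confirm that this does not break the two facts the construction relies on. That requesting "a missing page" still forces a fault every step is immediate (the cache never holds all $k+1$ pages). That the offline bound $n/k+O(1)$ still holds against whatever cache trajectory $\ALG$ produces is the delicate point: the adversary, whenever it has a choice of missing page, must keep the request sequence structured enough (the standard hole‑maintenance argument on $k+1$ pages) that some offline strategy can coast for $k$ steps between faults regardless of how $\ALG$ behaves. This is exactly the place where the argument departs from the textbook $k$‑lower‑bound for paging, and it is where I expect the write‑up to require the most care; part (a), by contrast, is essentially ski‑rental plus the amortization remark above.
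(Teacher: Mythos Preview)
Your proposal is correct and follows the paper's approach: part (a) is the ski-rental lower bound embedded via a single page (the paper merely invokes the reduction to ski-rental, while you spell out the per-interval inequality and obtain the sharper $2-\lambda$ that actually appears in the results table); part (b) is exactly the paper's $(k{+}1)$-page adversary with the same accounting, $\ALG\ge n(1+\lambda)$ against $\OPT\le n/k + nk\lambda$. The ``main obstacle'' you flag in (b) is not one: the Belady/LFD bound of at most $n/k+O(1)$ faults holds for \emph{every} request sequence drawn from $k{+}1$ pages --- after each offline fault the evicted page is the one whose next occurrence is latest among the $k$ pages then in cache, so all other $k{-}1$ of them (plus the just-requested page) are seen before the next fault --- and this is independent of which missing page the adversary picks when $\ALG$ happens to hold fewer than $k$ pages, so no hole-maintenance argument is needed.
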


\begin{proof}
	(a) $\lambda > \inv{k}$: Corollary \ref{cor:high-rent-cache-deterministic} implies a deterministic lower bound of $2$.

	(b) $\inv{k} \ge \lambda$: The adversary requests files from the set $\{1, 2, 3, \cdots, k+1\}$. At each step, the adversary requests a file that is not present in the cache of the algorithm.
	The algorithm faults at each time step and pays at least $\lambda$ at each step. OPT pays the rental cost to keep $k$ items in the cache at each time step and faults once in $k$ steps. So, the ratio is at least $\frac{k + k \lambda}{1 + k^2 \lambda}$. For sufficiently small $\lambda$, the ratio tends to $k$.
\end{proof}

\begin{theorem}
	The competitive ratio of any randomized algorithm for rental paging is at least (a) $\frac{e}{e-1}$ when $\lambda > \inv{k}$ and (b) $\frac{H_k + k H_k \lambda}{1 + k^2H_k \lambda}$ when $\lambda \le \inv{k}$.
\end{theorem}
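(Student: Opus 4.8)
The plan is to use Yao's principle in both regimes: it suffices to exhibit, for each regime, a distribution $D$ over request sequences for which $\min_{A} E_{D}[A(\sigma)] \ge \rho \cdot E_{D}[\OPT(\sigma)]$ with $\rho$ the claimed ratio, where $A$ ranges over deterministic online algorithms. For part (a), with $\lambda > \inv{k}$, I would reduce from the randomized lower bound for ski-rental \citep{karlin1988competitive}; the cache size is irrelevant, so a single file suffices. Fix a file $f$ and a large integer $r$, and let $D$ draw i.i.d.\ gap lengths $m_1, \dots, m_r$ from the distribution witnessing the $\frac{e}{e-1}$ lower bound for ski-rental with buying cost $1$ and renting rate $\lambda$ (break-even after $\inv{\lambda}$ steps); the request sequence is then $f$, followed by $m_1$ idle steps, followed by $f$, followed by $m_2$ idle steps, and so on, ending with $f$. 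As already observed in the proof of Theorem~\ref{thm:rent-inf-cache}, the behaviour of any algorithm on the $i$-th gap is exactly a ski-rental decision: keeping $f$ throughout the gap costs $m_i \lambda$, while evicting $f$ after $j$ steps costs $j\lambda$ in rent plus the retrieval cost $1$ at the end of the gap, so the per-gap optimal cost is $\min(1, m_i\lambda)$. Hence, apart from the single retrieval of $f$ at time $1$ (paid identically by $\ALG$ and $\OPT$), $\sigma$ is $r$ chained independent ski-rental instances: the ski-rental lower bound forces $E_{D}[\text{cost of } A \text{ on gap } i \mid \text{history}] \ge \frac{e}{e-1}\, E[\min(1, m_i\lambda)]$ for every deterministic $A$, whereas $\OPT$ plays each gap optimally. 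Writing $c := E[\min(1, m\lambda)] > 0$, this gives $\min_{A} E_{D}[A]/E_{D}[\OPT] \ge (1 + \tfrac{e}{e-1} r c)/(1 + r c)$, which tends to $\frac{e}{e-1}$ as $r \to \infty$ (the constant in the ski-rental bound itself tending to $\frac{e}{e-1}$ as $\inv{\lambda} \to \infty$, which is compatible with $\lambda > \inv{k}$ in the regime $\lambda = \Theta(\inv{k})$, $k \to \infty$).

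For part (b), with $\lambda \le \inv{k}$, I would take the classical hard instance behind the randomized $H_k$ lower bound for paging \citep{fiat1991marking} and superimpose the rental accounting. Work with $k+1$ pages and let $D$ emit $T$ consecutive requests, each uniform over the $k$ pages different from the previous one, with $T \to \infty$. On the algorithm side: since there is a request at every step, the cache of any deterministic $A$ always contains the most recent request, hence at most $k-1$ of the remaining $k$ pages are cached and each new request misses with probability at least $\inv{k}$; therefore $E_{D}[\text{retrieval cost}] \ge T/k$, and since the cache holds at least one page at each step, $E_{D}[\text{rental cost}] \ge T\lambda$, so $E_{D}[A] \ge T/k + T\lambda$. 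On the offline side: partition $\sigma$ into phases, a phase being a maximal block of consecutive requests touching at most $k$ distinct pages; a coupon-collector computation gives expected phase length $\sum_{j=1}^{k} \tfrac{k}{k+1-j} = kH_k$, so there are $(1 + o(1))\, T/(kH_k)$ phases, and since consecutive phases cannot share the same missing page, the offline algorithm that keeps exactly the $k$ pages requested in the current phase faults only once per phase, giving $E_{D}[\OPT] \le (1 + o(1))\, T/(kH_k) + Tk\lambda$. Dividing and letting $T \to \infty$,
\[
\frac{\min_{A} E_{D}[A]}{E_{D}[\OPT]} \ \ge\ \frac{\inv{k} + \lambda}{\inv{kH_k} + k\lambda} \ =\ \frac{H_k + kH_k\lambda}{1 + k^2 H_k \lambda},
\]
which is the claimed bound (and reduces, as $\lambda \to 0$, to the $H_k$ bound of \citep{fiat1991marking}).

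The delicate step is the $\OPT$ estimate in part (b): one must check that the stated offline strategy indeed incurs exactly one fault per phase while holding precisely $k$ pages (the key point being that the ``hole'' of two consecutive phases cannot coincide, so a single eviction at each phase boundary suffices to reconfigure the cache), and that the expected number of phases of a length-$T$ sequence is $(1 + o(1))\,T/(kH_k)$, which requires the coupon-collector sum together with a renewal-theoretic passage from expected phase length to expected phase count, absorbing the $O(k)$ faults of the first (cold-start) phase into the lower-order term. Part (a) is comparatively routine: once the per-gap reduction is seen to be cost-exact, the only care needed is that the lone time-$1$ retrieval be made negligible by taking $r \to \infty$, and that the quoted ski-rental lower bound be at least $\frac{e}{e-1}$ in the relevant range of $\inv{\lambda}$.
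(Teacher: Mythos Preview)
Your approach is essentially the paper's: for (a) both reduce to the ski-rental lower bound (the paper simply invokes the equivalence established around Theorem~\ref{thm:high-rent-cache}, while you spell out the chained-gap construction explicitly), and for (b) both use the uniform distribution on $k+1$ pages avoiding the previous request, the coupon-collector phase length $kH_k$, and the same one-fault-per-phase $\OPT$ strategy.

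The one substantive difference is your rental accounting for $\ALG$ in (b). You lower-bound $\ALG$'s rental by $\lambda$ per step (at least one page cached), which yields exactly the numerator $H_k + kH_k\lambda$ in the stated bound. The paper goes one step further: it argues that, under this request distribution and when $\lambda \le \inv{k}$, the per-step expected cost of a deterministic algorithm holding $p$ pages is $1 + \inv{k} - p(\inv{k}-\lambda)$, which is minimized at $p=k$; hence the best deterministic $\ALG$ keeps its cache \emph{full} and pays $k\lambda$ rental per step. This gives the strictly stronger ratio $\frac{H_k + k^2 H_k\lambda}{1 + k^2 H_k\lambda}$, which is the bound appearing in the paper's results table and in its proof; the $kH_k\lambda$ in the theorem statement appears to be a typo for $k^2H_k\lambda$. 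Your argument is correct for the bound as literally stated, but the paper's extra observation buys the tighter expression at essentially no cost.
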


\begin{proof}
	(a) $\lambda > \inv{k}$: Corollary \ref{cor:high-rent-cache-randomized} implies a randomized lower bound of $\frac{e}{e-1}$.

	(b) $\lambda \le \inv{k}$: The adversary requests files from a set of $k+1$. At each step, the adversary requests a file with uniform probability over all files except the file requested at the previous step. We split the request sequence into \emph{phases} as follows. A phase is the longest request sequence with at most $k$ distinct requests, and starts immediately after the previous phase ends.

	We now show that the expected length of each phase is $k H_k$. When $i$ files have been requested, the probability of requesting a file that has not been requested in the phase is $\frac{k - i}{k}$. Thus the expected length of a phase is $ \Big( \sum_{i=1}^{k}{\frac{k}{k-i+1}} \Big) = H_k$.

	Next, we show that the algorithm keeps its cache full to minimize the expected cost in a phase. Assume that $i$ distinct files have been requested in the phase and the algorithm has $p \le k$ files in the cache. At the next time step, the algorithm faults with a probability $\frac{k-c+1}{k}$ and pays a rental cost $c\lambda$. So, the expected cost of the algorithm is $1 + \inv{k} - p(\inv{k} - \lambda)$. Since, $\inv{k} \ge \lambda$, the cost is minimized when $p = k$. So, it pays $\inv{k}$ eviction cost and $k\lambda$ rental cost at each step. OPT pays the same rental cost, but faults once in each phase. So, for each phase, the ratio is at least $\frac{H_k + k^2 H_{k} \lambda }{1 + k^2 H_{k} \lambda }$. For sufficiently small $\lambda$, the ratio tends to $H_k$.
\end{proof}

\section{Caching with zapping}

\subsection{Deterministic algorithms using CILP}
\label{sec:zapping-cilp}

In this section, we present a deterministic algorithm, \zappingpagingcilp, for paging with zapping, and then extend the algorithm to caching with zapping. We introduce another indicator variable for zapping of files.

\begin{itemize}
	\item $z_f$: indicator variable for the event that the file $f$ has been zapped
\end{itemize}

We formulate paging with zapping as follows (LP-Paging-Zapping):
\begin{align*}
	\min & \quad \sum_{t = 1}^{T}{x_t} + N \sum_{f \in F} z_f \\
	\text{s.t.} \quad \forall t, \forall Q \in Q(t): & \quad \Big( \sum_{s \in Q}{\lfloor x_s \rfloor + \lfloor z_{f_s} \rfloor} \Big) + \lfloor z_{f_t} \rfloor \ge 1 \\
\end{align*}

The constraints say that either $f_t$ is zapped, or at least one file in the cache is either zapped or evicted. Whenever \zappingpagingcilp\ gets a constraint that is not satisfied, raises $x_s$ at unit rate and $z_{f_s}$ at rate $\inv{N}$.
	
\begin{theorem}
	\zappingpagingcilp\ is $(2k+1)$-competitive.
\end{theorem}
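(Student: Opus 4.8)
The plan is to apply directly the generic online covering analysis from Section~\ref{sec:cilp-approach}: since \zappingpagingcilp\ is exactly the greedy covering algorithm with no modifications, it suffices to bound the number of variables appearing in any constraint on which the algorithm does work, and that bound is the competitive ratio. So first I would identify the variables of a constraint. Fix a time $t$ and a set $Q \in Q(t)$. The associated constraint $\big(\sum_{s \in Q} \lfloor x_s\rfloor + \lfloor z_{f_s}\rfloor\big) + \lfloor z_{f_t}\rfloor \ge 1$ involves the $k$ eviction variables $\{x_s : s \in Q\}$, the $k$ zapping variables $\{z_{f_s} : s \in Q\}$, and the single zapping variable $z_{f_t}$. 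Since $Q \subseteq R(t) - \{t\}$ consists of the most-recent-request times of $k$ distinct files, all of them different from $f_t$, these $2k+1$ variables are pairwise distinct, so every constraint has exactly $2k+1$ variables. Hence $\Delta = 2k+1$, which gives $(2k+1)$-competitiveness.

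To make the constant explicit I would reproduce the potential-function argument of Section~\ref{sec:cilp-approach} with the natural potential $\phi = \sum_{t=1}^T \max(x^*_t - x_t,\,0) + N\sum_{f \in F}\max(z^*_f - z_f,\,0)$, where the starred quantities are the values in an optimal (integral) caching-with-zapping solution, which is feasible for LP-Paging-Zapping. While \zappingpagingcilp\ works on an unsatisfied constraint it raises each $x_s$ at unit rate and each $z_{f_s}$ and $z_{f_t}$ at rate $\frac{1}{N}$, so its cost grows at rate $k\cdot 1 + k\cdot N\cdot\frac{1}{N} + N\cdot\frac{1}{N} = 2k+1$; meanwhile, because the constraint is unsatisfied, some variable in it is still strictly below its (integral) optimum, and that variable's term in $\phi$ is being driven down at rate $1$ (rate $1$ for an $x$-variable, rate $N\cdot\frac{1}{N}=1$ for a $z$-variable). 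Thus the invariant $\ALG/(2k+1) + \phi \le \OPT$ is preserved; since $\phi = \OPT$ initially and $\phi \ge 0$ always, $\ALG \le (2k+1)\,\OPT$.

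The only point that needs a moment's care is that the generic bound counts only constraints on which the algorithm actually does work: unlike the rental-paging case, where the rent-evict constraints make the cache-size constraints redundant and so lower the effective $\Delta$, here there is a single family of constraints and nothing guarantees that any of them is pre-satisfied, so one cannot shrink $\Delta$ below $2k+1$ by that kind of argument. I would also remark that the floors cause no difficulty, since the optimal solution is integral and the algorithm performs an eviction or a zap exactly when the corresponding variable reaches $1$. Beyond this bookkeeping I expect no real obstacle; the content of the theorem is essentially the variable count together with the generic analysis already established.
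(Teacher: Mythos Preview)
Your proposal is correct and follows exactly the paper's own approach: the paper's proof is the single sentence ``Each constraint has $2k+1$ variables,'' invoking the generic $\Delta$-competitive guarantee from Section~\ref{sec:cilp-approach}. Your write-up simply unpacks that sentence by making the variable count explicit and by rehearsing the potential-function argument, which is more detailed than the paper but not a different route.
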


\begin{proof}
	Each constraint has $2k+1$ variables. Thus, \zappingpagingcilp\ is $(2k+1)$-competitive.
\end{proof}

Now we extend this approach to caching with zapping and present the algorithm \zappingcachingcilp. We define $Q(t)$ same as we defined for rental caching in section \ref{sec:rental-cilp}. We have the following linear program (LP-Caching-Zapping) for caching with zapping:

\begin{align*}
	\min & \quad \sum_{t = 1}^{T}{cost(f_t) \cdot x_t} + N \sum_{f \in F} z_f \\
	\text{s.t.} \quad \forall t, \forall Q \in Q(t): & \quad \Big( \sum_{s \in Q}{\min (\lfloor x_s \rfloor + \lfloor z_{f_s} \rfloor, 1) \cdot size(f_s) } \Big) + \lfloor z_{f_t} \rfloor \cdot size(f_t) \ge size(f_t) \\
\end{align*}

For each not yet satisfied constraint that \zappingcachingcilp\ gets, it raises $x_s$ at rate $\inv{cost(f)}$, and $z_{f_s}$ at rate $\inv{N}$.

\begin{theorem}
	\zappingcachingcilp\ is $(2k+1)$-competitive.
\end{theorem}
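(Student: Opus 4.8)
The plan is to invoke the online covering framework of Section~\ref{sec:cilp-approach} essentially verbatim, exactly as was done for \zappingpagingcilp: the competitive ratio will be the maximum number of variables appearing in any constraint of LP-Caching-Zapping on which the algorithm does work, so the whole argument reduces to (i) checking that the integral solution produced by \zappingcachingcilp\ corresponds to a legal caching-with-zapping schedule and conversely, and (ii) bounding that variable count by $2k+1$.

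For (i), I would argue as for plain paging and for \rentalcachingcilp, keeping it brief. Any feasible caching-with-zapping schedule yields a feasible integral solution to LP-Caching-Zapping: set $x_t = 1$ iff $f_t$ is evicted before $t^\prime$, and $z_f = 1$ iff $f$ is ever zapped; the constraint for $t, Q$ asks that whenever the witness set $Q$ of most-recent requests (of total size in $(k - size(f_t), k]$) would overflow the primary cache together with $f_t$, either $f_t$ is zapped, or some $f_s$ with $s \in Q$ is zapped or evicted — which any legal schedule satisfies, since zapping a file removes it from the primary cache. Conversely, whenever \zappingcachingcilp's variables satisfy all constraints issued at time $t$, the primary cache has room for $f_t$ (or $f_t$ is already zapped), so evicting a file when its $x$ hits $1$ and zapping a file when its $z$ hits $1$ maintains feasibility. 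This is the same reduction template used earlier in the paper.

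For (ii), fix a constraint for time $t$ and $Q \in Q(t)$. Because every file has size at least $1$ and $size(Q) \le k$, we get $|Q| \le k$. The variables appearing in the constraint are $\{x_s : s \in Q\}$, $\{z_{f_s} : s \in Q\}$, and $z_{f_t}$. Since $Q \subseteq R(t) - \{t\}$ and $t$ is the time of the most recent request to $f_t$, the files $f_s$ for $s \in Q$ are pairwise distinct and none of them equals $f_t$, so these are at most $k + (k+1) = 2k+1$ distinct variables. Feeding this into the potential argument of Section~\ref{sec:cilp-approach} with $\phi = \sum_t cost(f_t)\max(x^*_t - x_t, 0) + N\sum_{f \in F}\max(z^*_f - z_f, 0)$, and noting that the objective is linear (hence submodular and non-decreasing) and that each constraint's feasible region is closed upward because $\lfloor\cdot\rfloor$ and $\min(\cdot,1)$ are non-decreasing: while the algorithm works on a constraint, its cost rises at rate equal to the number of variables being raised, which is at most $2k+1$ (each $x_s$ rises at rate $1/cost(f_s)$, contributing $1$ to the cost rate, and each $z_{f_s}$ at rate $1/N$, contributing $1$), while $\phi$ falls at rate at least $1$ because some variable in the constraint is still below its optimal value (else the constraint would already hold). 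Hence the invariant $\ALG/(2k+1) + \phi \le \OPT$ is maintained, and $\phi \ge 0$ gives $\ALG \le (2k+1)\OPT$.

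The only delicate point — the ``main obstacle,'' such as it is — is the bookkeeping in step (ii): confirming that $z_{f_t}$ is a genuinely new variable not already among $\{z_{f_s} : s \in Q\}$ (so that the count is $2k+1$ rather than being accidentally double-counted or undercounted), and that the $\min(\lfloor x_s\rfloor + \lfloor z_{f_s}\rfloor, 1)$ clause inside the constraint does not break the upward-closedness required by the framework. Both hold, but each deserves a sentence. Everything else is a transcription of the \zappingpagingcilp\ case, so I would not belabor it.
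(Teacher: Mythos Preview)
Your proposal is correct and follows essentially the same approach as the paper: bound the number of variables in any constraint by $2k+1$ (using $size(f)\ge 1$ to get $|Q|\le k$) and invoke the online covering framework of Section~\ref{sec:cilp-approach}. The paper's own proof is a one-line appeal to exactly this variable count, so your additional care with feasibility, distinctness of $z_{f_t}$, and upward-closedness simply fills in details the paper leaves implicit.
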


\begin{proof}
	Since each file has size at least 1, each constraint has at most $2k+1$ variables. Thus, the \zappingcachingcilp\ is $(2k+1)$-competitive.
\end{proof}

\begin{theorem}
	The algorithm that zaps the file when it is requested for the first time is $N$-competitive.
\end{theorem}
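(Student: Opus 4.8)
The plan is to analyze the naive algorithm $\ALGZ$ that zaps every file at the moment of its first request, and show $\ALGZ(\sigma) \le N \cdot \OPT(\sigma)$ directly, without any potential-function machinery. First I would observe that $\ALGZ$ incurs cost exactly $N \cdot |D(\sigma)|$, where $D(\sigma)$ is the set of distinct files appearing in $\sigma$: each such file is zapped exactly once, at its first request, at cost $N$, and thereafter never faults and never pays anything. (On the very first request to a file, before it is zapped, one might worry about paying a retrieval cost; but we can have $\ALGZ$ zap it immediately upon the request without ever loading it into the size-$k$ cache, so the only cost charged is the zapping cost $N$. This handles the fact that zapped files live in the auxiliary infinite cache.)

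Next I would lower-bound $\OPT(\sigma)$. The key claim is that any feasible solution — in particular the optimal one — must, for each distinct file $f \in D(\sigma)$, pay at least $1$ on account of $f$: either it zaps $f$ at some point (cost $N \ge 1$), or it never zaps $f$, in which case $f$ is requested at least once and not in the auxiliary cache, so the first request to $f$ is a fault costing $\mathrm{cost}(f)$. The subtlety is that $\mathrm{cost}(f)$ could be $0$, in which case this argument gives nothing. I expect this to be the main obstacle, and the resolution is that the theorem should be read in the paging/unit-cost regime implicit in the surrounding section (or with the standing assumption $\mathrm{cost}(f) \ge 1$), or more carefully: if $\mathrm{cost}(f) = 0$ then $f$ is free for everyone including $\ALGZ$, so we may simply discard all zero-cost files from $\sigma$ at no loss to either algorithm and argue on the remaining sequence where every cost is at least $1$.

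Having reduced to the case where every file has cost at least $1$ (and zapping cost $N \ge 1$), I would conclude: $\OPT(\sigma) \ge |D(\sigma)|$, since the optimal solution pays at least $1$ for each distinct file as argued above and these charges are to disjoint events (distinct files). Combining, $\ALGZ(\sigma) = N\cdot|D(\sigma)| \le N \cdot \OPT(\sigma)$, which is the desired bound with additive constant $c = 0$. The whole argument is a counting/charging argument rather than an online potential argument, which is why no amortized analysis is needed; the only place care is required is the bookkeeping around zero-cost files and the fact that the first touch of an un-zapped file is genuinely a fault because the auxiliary cache starts empty.
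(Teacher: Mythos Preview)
Your approach is correct and essentially identical to the paper's: both argue that the zap-everything algorithm pays exactly $N \cdot T$ where $T$ is the number of distinct files requested, while $\OPT \ge T$ because each distinct file must be brought into the cache at least once at cost at least~$1$. The paper's proof is simply the unit-cost (paging) version of your argument and omits your discussion of zero-cost files, writing only ``$1$ to bring each file in the cache.''
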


\begin{proof}
	Let the total number of distinct files requested be $T$. The total cost of OPT is at least $T$ (1 to bring each file in the cache). The total cost of the algorithm is at most $NT$. So, the ratio is at most $N$.
\end{proof}

\subsection{Lower Bounds}

\begin{theorem}
	For paging with zapping, the competitive ratio of any deterministic algorithm is at least $\frac{2Nk + N - (k+1)}{N + 2k}$.
\end{theorem}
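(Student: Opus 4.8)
The plan is to run an adaptive adversary against an arbitrary fixed deterministic online algorithm $\ALG$, in the same spirit as the other lower bounds in this section. The adversary knows $\ALG$ and builds the request sequence in \emph{phases}. Throughout a phase it maintains a set of $k+1$ \emph{live} pages and at each step requests the unique live page that $\ALG$ currently has neither in its cache nor zapped, so that $\ALG$ faults on \emph{every} request. Whenever $\ALG$ zaps a live page — which shrinks the live set to $k$ and would otherwise let $\ALG$ coast — the adversary immediately introduces a brand-new page, restoring a set of $k+1$ live pages, and it never requests a page again once $\ALG$ has zapped it, so each zap $\ALG$ makes is ``wasted''. If $\ALG$ ever stops zapping, the phase runs forever and $\ALG(\sigma)/\OPT(\sigma)\to\infty$; hence we may assume $\ALG$ zaps again and again.

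The key is an asymmetry built into each phase. The adversary closes a phase after $\ALG$ has been forced to make a controlled number of zaps, and — having watched which pages $\ALG$ zapped — it designates one page $w$ that $\ALG$ did \emph{not} zap during the phase, positioning the injected fresh pages so that $w$ stays live throughout. Then: (i) since $\ALG$ faults on every request, its phase cost equals the number of requests plus $(N-1)$ for each zap, and the bookkeeping will pin this at no less than $N(2k+1)-(k+1)$ per phase — an $\ALG$ that zaps too lazily faults forever (ratio $\infty$), while an $\ALG$ that zaps too eagerly pays $\approx N$ per request and is $\approx N$-competitive, with $N\ge\frac{2Nk+N-(k+1)}{N+2k}$, so the bound only has to be checked on the intermediate regime that the drip-feed of fresh pages is designed to pin down; (ii) $\OPT$, knowing the whole phase in advance, zaps $w$ once (cost $N$), and since removing $w$ leaves at most $k$ pages live at any instant, $\OPT$ keeps them all cached and only pays to load each page that ever becomes live; the phase makes only $2k$ pages besides $w$ ever live, so $\OPT$ pays at most $N+2k$ per phase. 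Running $m$ phases and letting $m\to\infty$ absorbs the additive term and gives $\ALG(\sigma)\ge\frac{2Nk+N-(k+1)}{N+2k}\,\OPT(\sigma)-O(1)$; the seams between consecutive phases (a single carried-over page, plus the first phase's start-up loads) only affect the $O(1)$ and are handled separately.

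The step I expect to be the main obstacle is making the per-phase $\ALG$ bound hold \emph{no matter how $\ALG$ plays}, simultaneously with the adaptive choice of $w$. Because $\ALG$ may zap at any time and may zap any live page (indeed it may zap a page not currently requested, which only costs it more), the adversary's requesting rule and its injection and relabelling of fresh pages must be specified as an online reaction to $\ALG$ that at once (a) keeps every request a fault for $\ALG$, (b) forces enough zaps that $\ALG$'s cost over the phase reaches $N(2k+1)-(k+1)$, and (c) always leaves a valid never-zapped page $w$ to hand to $\OPT$, and one must verify these three requirements are mutually consistent and robust to every deterministic strategy of $\ALG$. Getting this scheduling argument right, rather than the arithmetic of combining $N(2k+1)-(k+1)$ against $N+2k$, is the crux of the proof.
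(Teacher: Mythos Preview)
Your adversary construction is essentially the same as the paper's (maintain $k+1$ live pages, always request one not in $\ALG$'s cache, inject a fresh page whenever $\ALG$ zaps), and your choice of $w$ corresponds to the paper's choice of which initial file the offline strategy zaps. The gap is in the cost accounting, not the scheduling.

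You assert that ``the bookkeeping will pin [$\ALG$'s phase cost] at no less than $N(2k+1)-(k+1)$,'' but this is false for precisely the eager $\ALG$ you mention. If you close a phase after $k$ zaps and $\ALG$ zaps on each of its first $k$ requests, the phase has length $k$ and $\ALG$ pays only $Nk$. Against your single offline strategy (zap $w$, cost at most $N+2k$) the ratio you can certify is $\frac{Nk}{N+2k}$, which for large $N$ is below the target. Your hedge that eager $\ALG$ is ``$\approx N$-competitive'' is true, but only because the \emph{correct} offline response to eager $\ALG$ is to zap nothing and simply cache the $\le k$ pages that get requested --- a different offline strategy from the zap-$w$ one you committed to. So the per-phase bound you want cannot hold uniformly against a single offline comparison.

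The paper resolves this by running the same adversary but comparing $\ALG$ to the \emph{minimum} of two offline strategies: (a) zap nothing, paying roughly $k + T - 1 + \frac{1}{k}\sum_j H_j$ over a round with $T$ zaps and total length $\sum_j H_j$; and (b) zap one well-chosen initial file (your $w$), paying $k + T - 1 + N$. Strategy~(a) is cheap when $\ALG$ zaps eagerly (short $\sum_j H_j$), strategy~(b) is cheap when $\ALG$ delays zapping (long $\sum_j H_j$). The ratio $\frac{(N-1)T+\sum_j H_j}{\min(\text{(a)},\text{(b)})}$ is then minimized over $\ALG$'s choices of $T\ge k+1$ and $\sum_j H_j$, and the minimum --- attained where the two offline costs coincide, $\sum_j H_j = Nk$, and $T=k+1$ --- is exactly $\frac{2Nk+N-(k+1)}{N+2k}$. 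This two-strategy comparison is the missing ingredient; once you add the ``zap nothing'' offline alternative, the case split you gesture at (lazy/eager/intermediate) becomes a single minimization and the argument goes through.
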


\begin{proof}
	The adversary maintains a set of $k+1$ distinct files at all times. Every time a file is zapped by the algorithm, it is replaced in the set by a new file that has never been requested by the adversary. At each time step, the adversary requests a file that is not present in the cache of the algorithm. We define a \emph{zap-phase} as follows. A zap-phase ends every time a file is zapped and the following request marks the beginning of the next zap-phase. The first zap-phase starts with the first request of the input sequence. We define a \emph{round} as follows. The first round starts with the first request of the input sequence. A round ends when the algorithm has zapped the all of the $k+1$ files that were requested in the first $k+1$ time steps in that round (some other files may have been zapped too). The total number of files zapped in a round is at least $k+1$. The adversary repeats the process for a large number of rounds.

	Now we show the lower bound on the competitive ratio in each round. Consider any round. Let $T \ge k+1$ be the total number of files zapped by the algorithm in the round. Let $H_j$ be the length of zap-phase $j, 1 \le j \le T$.

	Any deterministic algorithm faults at each time step and zaps a total of $T$ files. So, the cost of any deterministic algorithm is at least, $ALG = NT + \sum_{j = 1}^{T}{(H_j - 1)} = (N-1) T + \sum_{j=1}^{T}{H_j}$. Note that, $\sum_{j=1}^{T}{H_j} \ge T$.

	When $\sum_{j=1}^{T}{H_j} = T$, the algorithm zaps each file when it is requested for the first time. In this case, adversary requests a new file at each step. OPT pays at most $\min{(1, N)}$ at each step while the algorithm pays $N$ at each step. For $N > 1$, the ratio is at least $N$.

	Now we assume that $\sum_{j=1}^{T}{H_j} > T$. Consider the offline algorithm $\zappingoff$ which, on any request sequence, does one of the following: (a) Does not zap any file, or (b) Chooses one file from the set of the first $k+1$ files requested and zaps it at the first step.

	If $\zappingoff$ doesn't zap any files, in the first zap-phase it pays $k$ to bring the first $k$ files into the cache and then pays at most $\lceil \frac{H_1 - k}{k} \rceil$ in the remainder of the first phase. For any zap-phase $j > 1$, $\zappingoff$ pays at most $\lceil \frac{H_j}{k} \rceil$.

	\begin{align*}
		\zappingoff 	& = (k + \lceil \frac{H_1 - k}{k} \rceil) +
						\Big( \sum_{j = 2}^{T}{\lceil \frac{H_j}{k} \rceil \Big) } \\
				& \le (k + \frac{H_1}{k}) + \Big( \sum_{j = 2}^{T}{\frac{H_j}{k} + 1} \Big) \\
				& = k + T - 1 + \sum_{j = 1}^{T}{\frac{H_j}{k}}
	\end{align*}

	If $\zappingoff$ zaps 1 file, it incurs $k$ faults in the first phase and 1 fault in each phase after that. It also pays $N$ for zapping 1 file. The total cost in this case is $(k) + (T - 1) + N$.

	Since $cost(\zappingoff) \ge OPT$, the competitive ratio is at least $$ \frac{\sum_{j = 1}^{T}{H_j} + (N-1) T}{\min{( k + T - 1 + \sum_{j = 1}^{T}{\frac{H_j}{k}} , k + T + N - 1)}}$$.
	The ratio is minimized when $k + T - 1 + \sum_{j = 1}^{T}{\frac{H_j}{k}} = k + T + N - 1$. Simplifying gives, $\sum_{j = 1}^{T}{H_j} = Nk$. So, the ratio is at least $$\frac{Nk + (N - 1) T}{N + k + T - 1}$$
	For any given $N$ and $k$, and for $T \ge k+1$, the ratio is minimized when $T = k+1$. So, the competitive ratio is at least
	\begin{align*}
		& \frac{2Nk + N - (k+1)}{N + 2k}
	\end{align*}

	Thus, the competitive ratio is at least $\min{(N, \frac{2Nk + N - (k+1)}{N + 2k})}$. Note that, $\frac{2Nk + N - (k+1)}{N + 2k} = N\frac{2k + 1}{N + 2k} - \frac{k+1}{N + 2k} \le N$, for $N \ge 1$. Thus the competitive ratio of any deterministic algorithm for paging with zapping is at least $\frac{2Nk + N - (k+1)}{N + 2k}$.
\end{proof}

\section{Rental paging with zapping}

\subsection{Deterministic algorithm using CILP}

In this section we present algorithm \rentalzappingpagingcilp\ for rental paging with zapping.

We use the same notations defined in the previous sections. The cache size constraints are exactly the same as in case of paging with zapping. The rent-evict constraints are modified to have variables for eviction, renting, and zapping. We have the following formulation (LP-Paging-Rental-Zapping):
\begin{align*}
	\min & \quad \sum_{t = 1}^{T}{(x_t} + \lambda \sum_{t \le s < t^\prime}{y_{t, s}}) + N \sum_{f \in F} z_f \\
	\text{s.t.} \quad \forall t, \forall Q \in Q(t): & \quad \Big( \sum_{s \in Q}{\lfloor x_s \rfloor + \lfloor z_{f_s} \rfloor} \Big) + \lfloor z_{f_t} \rfloor \ge 1 & (III) \\
	\quad \forall t, t \le s < t^\prime: & \quad \lfloor y_{t, s}  \rfloor + \lfloor x_t \rfloor + \lfloor z_{f_s} \rfloor \ge 1 & (IV)
\end{align*}

We refer to $(III)$ and $(IV)$ by cache-size constraints and \emph{rent-evict-zap} constraints, respectively. \rentalzappingpagingcilp\ is similar to \rentalpagingcilp, and considers the rent-evict-zap constraints before the cache-size constraints. Whenever the algorithm gets a constraint that is not satisfied, it raises all the variables in the constraint as follows. It raises $x_s$ at unit rate, $y_{t, s}$ at rate $\inv{\lambda}$, and $y_f$ at rate $\inv{N}$.

We define \rentalzappingpagingcilp$_\gamma$ as \rentalzappingpagingcilp\ with the following modification. \rentalzappingpagingcilp$_\gamma$ raises $y_{t, s}$ at the modified rate of $\frac{\gamma}{\lambda}$, where $\gamma > 0$.

\begin{theorem}
	\label{thm:paging-rent-zap-cilp}
	For rental paging with zapping, (a) \rentalzappingpagingcilp\ is $3$-competitive when $\lambda \ge \inv{k}$, (b) \rentalzappingpagingcilp$_{k \lambda}$ is $(1 + \frac{2}{k \lambda})$-competitive when $\inv{k^2} < \lambda < \inv{k^2}$, and (c) \rentalzappingpagingcilp\ is $(2k+1)$-competitive when $\lambda \le \inv{k^2}$.
\end{theorem}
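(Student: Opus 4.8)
The plan is to follow the three-case structure of the proof of Theorem~\ref{thm:paging-rent-cilp}, adapting each case for the extra zapping variable in the rent-evict-zap constraints $(IV)$ and the extra zap terms in the cache-size constraints $(III)$. Two structural facts drive everything: (i) in cases (a) and (b) the algorithm never does work on a cache-size constraint, so every constraint it works on is a rent-evict-zap constraint, which has exactly three variables ($x_t$, $y_{t,s}$, and one $z$-variable); and (ii) in case (c) the algorithm does work on cache-size constraints, which have $2k+1$ variables ($k$ eviction variables, $k$ zap variables $z_{f_s}$ for $s\in Q$, plus $z_{f_t}$).

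For (a), $\lambda\ge\inv{k}$: I argue exactly as in Theorem~\ref{thm:paging-rent-cilp}(a) that processing only the rent-evict-zap constraints already forces every file out of the primary cache (evicted or zapped) within $k$ steps of its most recent request. The key observation is that when the algorithm works on the rent-evict-zap constraint at time $s$ for the file last requested at $t$, the eviction variable $x_t$ still rises at unit rate while $y_{t,s}$ rises at rate $\inv{\lambda}$, so $x_t$ gains $\lambda$ per step and reaches $1$ within $\lceil\inv{\lambda}\rceil\le k$ steps; the presence of the $z$-variable only adds another way to satisfy the constraint and never slows the rise of $x_t$. Hence at any time $t$ every $Q\in Q(t)$ with $|Q|=k$ contains some $s\le t-k$, and that file already has $\lfloor x_s\rfloor=1$ or $\lfloor z_{f_s}\rfloor=1$, so constraint $(III)$ is already satisfied when the algorithm considers it. Every worked-on constraint is then of type $(IV)$ with $3$ variables, and the meta-principle of Section~\ref{sec:cilp-approach} gives $3$-competitiveness.

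For (b), $\inv{k^2}<\lambda<\inv{k}$ (the second inequality in the statement should read $\lambda<\inv{k}$), run \rentalzappingpagingcilp$_{k\lambda}$, so $\gamma=k\lambda<1$ and $y_{t,s}$ rises at rate $\frac{\gamma}{\lambda}=k$, giving $x_t$ a gain of $\inv{k}$ per step; it again reaches $1$ within $k$ steps, so by the argument of (a) the algorithm works only on rent-evict-zap constraints. For the ratio I run the potential argument of Section~\ref{sec:cilp-approach} and Theorem~\ref{thm:paging-rent-cilp}(b) with the potential augmented by a zap term,
$$\phi=\sum_{t=1}^{T}\Big(\max(x^*_t-x_t,0)+\sum_{t\le s<t^\prime}\lambda\max(y^*_{t,s}-y_{t,s},0)\Big)+\sum_{f\in F}N\max(z^*_f-z_f,0).$$
When the algorithm works on a rent-evict-zap constraint its cost rises at rate $1\cdot1+\lambda\cdot\frac{\gamma}{\lambda}+N\cdot\inv{N}=2+\gamma$, while $\phi$ drops at rate at least $\min(1,\gamma,1)=\min(1,\gamma)=\gamma$ (the constraint being unsatisfied, some variable is strictly below its $\OPT$ value, and the three variables contribute drop rates $1$, $\gamma$, $1$). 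Maintaining $\ALG/(2+\gamma)+\phi/\gamma\le\OPT$, which holds initially since $\ALG=0$ and $\phi=\OPT$, together with $\phi\ge0$, yields $\ALG\le\frac{2+\gamma}{\gamma}\OPT=(1+\frac{2}{k\lambda})\OPT$.

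For (c), $\lambda\le\inv{k^2}$: now $x_t=k\lambda\le\inv{k}<1$ after $k$ steps, so the rent-evict-zap constraints no longer evict files fast enough and the algorithm does work on cache-size constraints $(III)$ with $2k+1$ variables, which dominates the $3$ variables of type $(IV)$; the meta-principle then gives $(2k+1)$-competitiveness. The main difficulty throughout is bookkeeping rather than a new idea: one must verify that inserting the zapping variable into constraint $(IV)$ does not break the argument used in case (a) that every file leaves the primary cache within $k$ steps — it does not, for the reason above — and then recount, since the relevant constraint now has $3$ variables rather than $2$, and in case (b) the cost rate is $2+\gamma$ rather than $1+\gamma$; this is exactly why the ratios become $3$, $1+\frac{2}{k\lambda}$, and $2k+1$ in place of the $2$, $1+\frac{1}{k\lambda}$, $k$ of Theorem~\ref{thm:paging-rent-cilp}. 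One should also note explicitly that $\OPT$ is feasible for LP-Paging-Rental-Zapping (any solution of the combinatorial problem is), which is what guarantees that an unsatisfied constraint has a variable strictly below its optimal value.
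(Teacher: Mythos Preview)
Your proposal is correct and follows essentially the same approach as the paper: the same three-case split, the same observation in (a) and (b) that the rent-evict-zap constraints alone force each file out of the cache within $k$ steps so that no work is done on cache-size constraints, the identical potential function (augmented with the $N\max(z^*_f-z_f,0)$ term) and rate calculation $2+\gamma$ versus $\min(1,\gamma)$ in (b), and the direct $(2k+1)$-variable count in (c). Your added remarks---the pigeonhole justification that every $Q\in Q(t)$ of size $k$ must contain some $s\le t-k$, the note that the statement's upper bound in (b) should read $\lambda<\inv{k}$, and the explicit observation that $\OPT$ is feasible for the LP---are reasonable clarifications but do not change the argument.
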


\begin{proof}
	(a) $\inv{k} \le \lambda$: We claim that, at any given time, if all the rent-evict-zap constraints are satisfied, the cache-size constraints are satisfied too. We prove this by showing that each file is evicted within $k$ steps from its latest request, by considering just the rent-evict-zap constraints. At any given time, the algorithm considers the rent-evict-zap constraint corresponding to each file in the cache.

	Consider the rent-evict-zap constraint at time $s$ for the file whose latest request was at time $t$.
	For this constraint, when $y_{t, s}$ goes from 0 to 1, $x_t$ increases by $\lambda$. So, the file has been in the cache for $s$ time steps since its latest request and $x_t = s \lambda$. This value is at least 1 if the file has been in the cache for $s \ge \inv{\lambda}$ steps. Since, $k \ge \inv{\lambda}$, $x_s$ is at least 1 when $s = k$, and consequently all the cache-size constraints where $x_s$ participates will be satisfied. Note that, if $z_{f_t}$ is 1 before $x_t$ is 1, the cache-size constraints are still satisfied.
	Thus, the algorithm does work only on rent-evict-zap constraints, each of which has exactly 3 variables. Thus, \rentalzappingpagingcilp\ is 3-competitive.

	(b) $\inv{k^2} < \lambda < \inv{k}$: When the algorithm considers a rent-evict-zap constraints, it raises $x_s$ at unit rate, but raises $y_{t, s}$ at rate $\frac{\gamma}{\lambda}$, where $\gamma = k \lambda$. The increment in $x_s$ is $\frac{\lambda}{\gamma}$ at each time step. So, for $\gamma \le k \lambda$, within $k$ steps $x_s \ge 1$ and hence the corresponding file is evicted. Thus, like in the previous case, the algorithm never does any work on the cache-size constraints. Now we show that this algorithm is $(1 + \inv{k \lambda})$-competitive. We use the following potential function for our proof:

	$$ \phi = \sum_{t=1}^{T}{\Big( \max{(x^*_t - x_t, 0)}} + \sum_{t \le s < t^\prime}{\lambda \max{(y^*_{t, s} - y_{t, s}, 0)} \Big) + \sum_{f \in F}{N \max{(z^*_f - z_f, 0)}} }$$
	
	Consider the rent-evict-zap constraint at time $s$ for the file whose most recent request was at time $t$.

	When the algorithm raises the variables in the constraint, the cost of the algorithm increases at the rate $(2 + \gamma)$. Also, $\phi$ decreases at the rate $\min (1, \gamma)$. Thus, the algorithm maintains the invariant $\ALG/(2 + \gamma) + \phi/(\min (1, \gamma)) \le \OPT$. It is true initially, because $\ALG = 0$ and $\phi = \OPT$. Since, $\phi \ge 0$, this implies that $\ALG \le \frac{2 + \gamma}{\min (1, \gamma)} \OPT$. Also, $\gamma = k \lambda \le 1$. So, $\ALG \le (1 + \frac{2}{k \lambda}) \OPT$.

	(c) $\lambda \le \inv{k^2}$: In this case, \rentalzappingpagingcilp\ does work on both cache-size constraints and rent-evict-zap constraints, and thus, the algorithm is $(2k+1)$-competitive.
\end{proof}

To extend the algorithm to rental caching with zapping, we combine the ideas from Sections \ref{sec:rental-cilp} and \ref{sec:zapping-cilp}. We modify $Q(t)$ to account for file sizes. $Q(t) = \{ Q \subseteq R(t) - \{t\} : k - size(f) < size(Q) \le k \}$. The following is the formulation of rental caching with zapping (LP-Caching-Rental-Zapping):
\begin{align*}
	\min & \quad \sum_{t = 1}^{T}{\Big( cost(f_t) \cdot x_t} + \lambda \sum_{t \le s < t^\prime}{y_{t, s}} \Big) + N \sum_{f \in F} z_f \\
	\text{s.t.} \quad \forall t, \forall Q \in Q(t): & \quad \Big( \sum_{s \in Q}{\min (\lfloor x_s \rfloor + \lfloor z_{f_s} \rfloor, 1) \cdot size(f_s) } \Big) + \lfloor z_{f_t} \rfloor \cdot size(f_t) \ge size(f_t) \\
	\quad \forall t, t \le s < t^\prime: & \quad \lfloor y_{t, s}  \rfloor + \lfloor x_t \rfloor + \lfloor z_{f_s} \rfloor \ge 1 & (IV)
\end{align*}

When \rentalzappingcachingcilp\ gets a rent-evict-zap constraint that is not yet satisfied, it raises $x_t$ at rate $\inv{cost(f_t)}$, $y_{t, s}$ at rate $\inv{\lambda}$, and $z_{f_t}$ at rate $\inv{N}$. When it gets a cache-size constraint, it raises $x_s$ at rate $\inv{cost(f_s)}$ and $z_{f_s}$ at rate $\inv{N}$.

\begin{theorem}
	\label{thm:rental-zap-caching-cilp}
	\rentalzappingcachingcilp\ is $(2k+1)$-competitive for rental caching with zapping.
\end{theorem}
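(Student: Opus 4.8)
The plan is to observe that \rentalzappingcachingcilp\ is nothing more than the greedy online covering algorithm of Section \ref{sec:cilp-approach} run on LP-Caching-Rental-Zapping, and then invoke the general principle stated there: the algorithm is $\Delta$-competitive, where $\Delta$ is the largest number of variables in any constraint on which the algorithm does work. So the whole proof reduces to three routine checks. (1) The reduction is faithful: any legal strategy for rental caching with zapping yields a feasible $0/1$ assignment to $(x,y,z)$ — set $x_t=1$ iff $f_t$ is evicted before $t^\prime$, $y_{t,s}=1$ iff $f_t$ is resident and paying rent at step $s$, and $z_f=1$ iff $f$ is ever zapped — of no larger cost, and conversely the integral output of the covering algorithm (every variable it raises reaches $1$) describes a legal strategy. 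In particular, the cache-size constraints for $Q\in Q(t)$, written with $\min(\lfloor x_s\rfloor+\lfloor z_{f_s}\rfloor,1)\cdot size(f_s)$ so that a zapped file is credited exactly like an evicted one, correctly certify that either $f_t$ itself is zapped or enough room has been freed for it. Hence $\OPT$'s strategy, viewed through $(x^*,y^*,z^*)$, is LP-feasible and its LP-value is at most $\OPT$.

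(2) The objective $\sum_t(cost(f_t)x_t+\lambda\sum_{t\le s<t^\prime}y_{t,s})+N\sum_f z_f$ is non-negative and linear, hence submodular and non-decreasing, and every constraint is closed upward (raising any $x$, $y$, or $z$ variable preserves satisfaction, thanks to the floors and the $\min(\cdot,1)$), so the hypotheses of the online covering algorithm are met. Moreover the rates \rentalzappingcachingcilp\ uses are exactly the reciprocals of the objective coefficients: $x_t$ at rate $1/cost(f_t)$, $y_{t,s}$ at rate $1/\lambda$, $z_{f_t}$ at rate $1/N$ on rent-evict-zap constraints, and $x_s$ at rate $1/cost(f_s)$, $z_{f_s}$ at rate $1/N$ on cache-size constraints. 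If an explicit argument is preferred, one uses the potential $\phi=\sum_t\bigl(cost(f_t)\max(x^*_t-x_t,0)+\lambda\sum_{t\le s<t^\prime}\max(y^*_{t,s}-y_{t,s},0)\bigr)+\sum_f N\max(z^*_f-z_f,0)$, which equals the LP-value of $\OPT$ initially; whenever the algorithm does work on a constraint, that constraint is violated by $(x,y,z)$ but satisfied by the integral $(x^*,y^*,z^*)$, so some variable lies strictly below its optimal value and $\phi$ drops at rate at least $1$, while $\ALG$ grows at rate at most (number of variables in the constraint).

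(3) It remains to bound $\Delta$. A rent-evict-zap constraint $(IV)$ has exactly the three variables $y_{t,s}$, $x_t$, $z_{f_s}$. A cache-size constraint for a request time $t$ and a set $Q\in Q(t)$ involves $x_s$ and $z_{f_s}$ for each $s\in Q$, together with $z_{f_t}$; since every file has $size\ge 1$ we have $|Q|\le size(Q)\le k$, and the $f_s$ ($s\in Q$) are distinct and distinct from $f_t$, so the constraint has at most $2k+1$ variables. Therefore $\Delta=2k+1$; as the algorithm may have to work on both kinds of constraints when $\lambda$ is small, the covering bound (equivalently, the invariant $\ALG/(2k+1)+\phi\le\OPT$ together with $\phi\ge0$) gives $\ALG\le(2k+1)\OPT$.

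The only point that genuinely differs from the plain file-caching reductions of Theorem \ref{thm:retnal-caching-cilp} and the caching-with-zapping results — and hence the one to double-check — is the coupling between zapping and capacity: one must confirm that a zapped file is correctly treated as not occupying the cache, so that $Q(t)$ ranges only over residency patterns of non-zapped files and the $\min(\cdot,1)$ form of the cache-size constraint is both necessary for soundness of the reduction and sufficient that no constraint with more than $2k+1$ variables is ever needed. Beyond this, everything is a routine repetition of the earlier caching analyses, so I do not anticipate a substantive obstacle.
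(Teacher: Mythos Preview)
Your proposal is correct and follows exactly the paper's approach: invoke the generic $\Delta$-competitiveness of the online covering algorithm from Section~\ref{sec:cilp-approach} and then bound $\Delta$ by counting variables, observing that each rent-evict-zap constraint has $3$ variables and each cache-size constraint has at most $2k+1$ because $size(f)\ge 1$ forces $|Q|\le k$. The paper's own proof is a two-line version of your step~(3); your additional verifications in (1) and (2) are sound elaborations but not required beyond what Section~\ref{sec:cilp-approach} already establishes.
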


\begin{proof}
	Since each file has size at least 1, each cache-size constraint has at most $(2k+1)$ variables and each rent-evict-zap constraint has exactly 3 variables. So, for the general case of caching with zapping, the algorithm is $(2k+1)$-competitive.
\end{proof}

\begin{corollary}
	\label{cor:rental-zapping-fault-model-cilp}
	For rental caching with zapping for the case of fault model, (a) \rentalzappingcachingcilp\ is $3$-competitive when $\lambda \ge \inv{k}$, (b) \rentalzappingcachingcilp$_{k \lambda}$ is $(1 + \frac{2}{k \lambda})$-competitive when $\inv{k^2} < \lambda < \inv{k^2}$, and (c) \rentalzappingcachingcilp\ is $(2k+1)$-competitive when $\lambda \le \inv{k^2}$.
\end{corollary}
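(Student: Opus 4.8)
The plan is to derive Corollary~\ref{cor:rental-zapping-fault-model-cilp} from Theorem~\ref{thm:paging-rent-zap-cilp}, in exactly the way Corollary~\ref{cor:rental-fault-model-cilp} is derived from Theorem~\ref{thm:paging-rent-cilp}. In the fault model $cost(f)=1$ for every file $f$, so the objective of LP-Caching-Rental-Zapping collapses to $\sum_{t=1}^{T} x_t + \lambda\sum_{t\le s<t^\prime} y_{t,s} + N\sum_{f\in F} z_f$, which is exactly the objective of LP-Paging-Rental-Zapping, and the rent-evict-zap constraints $(IV)$ are already syntactically identical in the two formulations (they never involve file sizes). Consequently, on every rent-evict-zap constraint \rentalzappingcachingcilp\ raises $x_t$ at rate $1/cost(f_t)=1$, $y_{t,s}$ at rate $1/\lambda$, and $z_{f_t}$ at rate $1/N$ --- precisely the behaviour of \rentalzappingpagingcilp\ --- and \rentalzappingcachingcilp$_{k\lambda}$ agrees with \rentalzappingpagingcilp$_{k\lambda}$ on those constraints as well.

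Given this identification, I would transcribe the three cases. For (a), $\lambda\ge 1/k$: every time the rent-evict-zap constraint for a cached file at the next step is satisfied by raising its $y$-variable to $1$, the eviction variable $x_t$ gains $\lambda$, so after $k$ steps $x_t\ge k\lambda\ge 1$ (unless $z_{f_t}$ has already reached $1$); hence each file is evicted or zapped within $k$ steps of its most recent request, \rentalzappingcachingcilp\ works only on rent-evict-zap constraints, and since each of those has $3$ variables it is $3$-competitive. For (b), $1/k^2<\lambda<1/k$: with the modified raising rate $\gamma/\lambda$ and $\gamma=k\lambda\le1$, the per-step increment of $x_t$ is $\lambda/\gamma=1/k$, so again no cache-size constraint is ever worked on; with $\phi=\sum_{t=1}^{T}\big(\max(x^*_t-x_t,0)+\sum_{t\le s<t^\prime}\lambda\max(y^*_{t,s}-y_{t,s},0)\big)+\sum_{f\in F}N\max(z^*_f-z_f,0)$, the algorithm's cost rises at rate $2+\gamma$ while $\phi$ drops at rate $\gamma$, giving the invariant $\ALG/(2+\gamma)+\phi/\gamma\le\OPT$ and hence the $(1+2/(k\lambda))$ bound. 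For (c), $\lambda\le 1/k^2$: the algorithm may also work on cache-size constraints, but since each file has size at least $1$ every cache-size constraint has at most $2k+1$ variables and every rent-evict-zap constraint has $3$, so by the $\Delta$-competitiveness of the online covering algorithm it is $(2k+1)$-competitive.

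The only step I expect to require genuine care is the assertion used in parts (a) and (b): once all rent-evict-zap constraints are satisfied, the cache-size constraints $(III)$ are satisfied automatically. For pure paging this is clean --- $|Q|=k$, so among the $k+1$ times of $Q\cup\{t\}$ one precedes $t$ by at least $k$ steps, and the eviction-within-$k$-steps property guarantees its eviction/zap variable already equals $1$. In the caching fault model a set $Q\in Q(t)$ may contain fewer than $k$ times when files are large, so I would re-derive the auto-satisfaction directly from the eviction-within-$k$-steps property: the files that are still un-evicted and un-zapped at time $t$ are exactly those last requested within the previous $k$ steps, and I would argue that the size-weighted constraint for any $Q\in Q(t)$ is already met, falling back if necessary on the fact that LP-Caching-Rental-Zapping relaxes the offline problem. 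With that in hand, the remainder of the proof is a line-by-line copy of Theorem~\ref{thm:paging-rent-zap-cilp} with $cost(f_t)$ set to $1$.
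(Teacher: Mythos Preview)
Your approach is exactly the paper's: it simply observes that in the fault model $cost(f)=1$, so the objective and the rent-evict-zap constraints coincide with those of rental paging with zapping, and then invokes the three cases of Theorem~\ref{thm:paging-rent-zap-cilp} verbatim. You are in fact more careful than the paper in flagging the one subtle point---whether satisfying all rent-evict-zap constraints automatically satisfies the size-weighted cache-size constraints when files may have size larger than~$1$---which the paper's two-sentence proof does not address.
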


\begin{proof}
	For the fault model, $cost(f)$ is 1 for each file $f$. So, the cost function and the rent-evict-zap constraints are the same as in case of rental paging with zapping. Thus, the three cases of Theorem \ref{thm:paging-rent-zap-cilp} still hold.
\end{proof}

\subsection{\rentalzappingcachingmeta}

Analogous to Theorem \ref{thm:rent-meta-alg}, we have the following theorem for rental caching with zapping.

\begin{theorem}
	\label{thm:rent-zap-meta-alg}
	If there is an $\alpha$-competitive algorithm $\ALGSR$ for ski-rental, and a $\beta$-competitive algorithm for caching with zapping (no rental cost) $\ALGZ$, then there is $(\alpha + \beta)$-competitive algorithm for caching with zapping and rental cost.
\end{theorem}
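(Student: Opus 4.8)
The plan is to follow the structure of the \rentalcachingmeta\ proof (Theorem~\ref{thm:rent-meta-alg}), with $\ALGC$ replaced by $\ALGZ$ and the infinite-cache subroutine $\ALGINF$ upgraded so that it also performs zapping. Concretely, \rentalzappingcachingmeta\ simulates $\ALGZ$ on the input $\sigma$ with a cache $\cache{1}$ of size $k$ (so $\ALGZ$ absorbs all the cache-size pressure and makes zapping decisions of its own), and in parallel simulates $\ALGINF$ with a cache $\cache{2}$ of infinite size on the subsequence of requests to files not yet zapped by either subroutine; $\ALGINF$ in turn runs copies of $\ALGSR$. The cache of \rentalzappingcachingmeta\ is $\cache{1}\cap\cache{2}$, and it zaps a file whenever $\ALGZ$ or $\ALGINF$ does. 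First I would record the two structural invariants exactly as in the \rentalcachingmeta\ analysis: the total size in $\cache{1}\cap\cache{2}$ is at most $k$ because it is contained in $\cache{1}$; and the algorithm is feasible because on a request to a not-yet-zapped file $f_t$, $\ALGZ$ places $f_t$ in $\cache{1}$ and $\ALGINF$ places it in $\cache{2}$, so $f_t$ lies in the meta-cache afterward (and if either subroutine has already zapped $f_t$ there is nothing to serve).

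Next comes the charging argument, again parallel to Theorem~\ref{thm:rent-meta-alg}. \rentalzappingcachingmeta\ incurs a retrieval cost for $f_t$ only when $f_t\notin\cache{1}$, in which case $\ALGZ$ also faults and I charge the retrieval to $\ALGZ$, or when $f_t\in\cache{1}$ but $f_t\notin\cache{2}$, in which case $\ALGINF$ faults and I charge it to $\ALGINF$. Its rental cost is at most $\ALGINF$'s rental cost since its cache is contained in $\cache{2}$, and each of its zaps is charged to the subroutine that performed it (ties broken arbitrarily). This yields $E[\text{\rentalzappingcachingmeta}]\le E[\ALGSR]+E[\ALGZ]$. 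It then remains to show that $E[\ALGZ]\le\beta\cdot\OPT$ and $E[\ALGSR]\le\alpha\cdot\OPT$. The first is immediate: dropping the rental term from an optimal solution for rental caching with zapping gives a feasible solution for caching with zapping of no greater cost, so $\OPTZ\le\OPT$. For the second, let $\OPTSR$ denote the optimum of the \emph{infinite-cache} version of rental caching with zapping; an infinite cache is only less constrained so $\OPTSR\le\OPT$, and deleting the requests to zapped files only lowers the optimum, so it would suffice to know that $\ALGINF$ is $\alpha$-competitive for this infinite-cache problem.

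That last point is the real obstacle. The plain ``keep versus evict'' reduction of Theorem~\ref{thm:rent-inf-cache} does not suffice: because $\ALGZ$ is analyzed with no rental cost, it will leave a file parked in $\cache{1}$ indefinitely even when $\OPT$ would zap it, so a single file requested at every step with $\lambda\ge\inv{k}$ would make the naive meta-algorithm pay $\Theta(T)$ in rent while $\OPT$ zaps once. The repair is to let $\ALGINF$ itself decide to zap: at a file's first request, zap it outright if its retrieval cost already exceeds $N$; and for every other file run, alongside the usual per-gap keep-versus-evict $\ALGSR$ instance, a second $\ALGSR$ instance whose accumulated ``rent'' is the per-step cost actually charged to that file (namely $\lambda$ while it sits in cache and $cost(f)$ for each forced re-retrieval) and whose ``buy'' is zapping at cost $N$, zapping the file when that instance buys. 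One then argues file by file that the resulting cost is within a constant of $\min\bigl(N,\ \text{the amount }\OPT\text{ spends on that file}\bigr)$ and that it composes correctly with $\ALGZ$'s own zaps and with the evictions that shrink $\cache{2}$. Making this per-file accounting precise — in particular controlling the non-uniform ``rent'' in the second $\ALGSR$ instance and confirming that the bound telescopes to $\alpha\cdot\OPTSR$ rather than to a larger constant (which would only change the stated $\alpha+\beta$ by a constant) — is where the genuine work lies; everything outside this subroutine is bookkeeping identical to the \rentalcachingmeta\ proof.
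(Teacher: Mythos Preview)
Your instinct is correct, and in fact you have put your finger on a flaw in the paper's own argument rather than in your plan. The paper does exactly what you call the ``naive'' construction: it runs $\ALGZ$ on a cache $\cache{1}$ of size $k$, runs the \emph{unmodified} ski-rental-based $\ALGINF$ on an infinite cache $\cache{2}$, takes $\cache{1}\cap\cache{2}$ as the meta-cache, and zaps only when $\ALGZ$ zaps. It then asserts $E[\ALGINF]\le\alpha\cdot\OPTSR\le\alpha\cdot\OPT$ and $E[\ALGZ]\le\beta\cdot\OPTZ\le\beta\cdot\OPT$. The second chain is fine (dropping the rental term can only lower the optimum), but the first is not: $\OPTSR$ is the optimum of rental caching with infinite cache and \emph{no zapping}, whereas $\OPT$ may zap. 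Your one-file-requested-every-step example breaks $\OPTSR\le\OPT$ exactly as you say: $\OPT$ pays $N$ once, while $\OPTSR$ --- and hence $\ALGINF$, and hence the meta-algorithm, since $\ALGZ$ has no reason to zap a file that sits free in its cache --- pays $\Theta(T\lambda)$ in rent. So the paper's construction is not $(\alpha+\beta)$-competitive as claimed.

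Where your own proposal is still incomplete is the repair. Upgrading $\ALGINF$ with an outer ski-rental (``accumulated cost so far on this file'' versus ``zap for $N$'') layered over the per-gap ``rent $\lambda$ vs.\ evict for $cost(f)$'' instances is a natural idea, and if it yields an $O(\alpha)$-competitive algorithm for the infinite-cache \emph{zapping} problem then the meta-argument goes through with $\OPTSR$ redefined to allow zapping (so $\OPTSR\le\OPT$ becomes trivially true). But you have not actually established that bound: the outer instance sees non-uniform per-step ``rent'' (sometimes $\lambda$, sometimes $cost(f)$), and you still need to argue that the inner and outer competitive factors compose to a constant and that the per-file charge is at most a constant times $\min\bigl(N,\ \text{what }\OPT\text{ spends on that file}\bigr)$. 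Until that subroutine analysis is carried out, the theorem remains unproved --- both in the paper and in your sketch --- though your diagnosis of \emph{why} is correct.
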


We present the \rentalzappingcachingmeta\ algorithm. Our algorithm uses $\ALGSR$ and $\ALGZ$ to generate a solution for rental caching with zapping. On an input sequence $\sigma$ and cache size $k$, \rentalzappingcachingmeta\ does the following. It simulates $\ALGZ$ on the input sequence $\sigma$ and cache $\cache{1}$ of size $k$. In parallel, it simulates $\ALGINF$ on the request sequence $\sigma$ and cache $\cache{2}$ of infinite size. $\ALGINF$ in turn simulates $\ALGSR$ on each request. At any time, the cache of \rentalzappingcachingmeta\ contains the intersection of the files present in caches $\cache{1}$ and $\cache{2}$. If $\ALGZ$ nukes a file, \rentalzappingcachingmeta\ nukes it.

\begin{claim}
	The total size of the items in the cache of \rentalzappingcachingmeta\ never exceeds $k$.
\end{claim}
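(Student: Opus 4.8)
The plan is to mimic the proof of the analogous claim for \rentalcachingmeta\ verbatim, since the cache-size argument there does not depend on zapping at all. Recall that \rentalzappingcachingmeta\ maintains its cache as the intersection $\cache{1} \cap \cache{2}$, where $\cache{1}$ is the cache maintained by the simulated algorithm $\ALGZ$ for caching with zapping on cache size $k$, and $\cache{2}$ is the (infinite) cache maintained by $\ALGINF$. First I would observe that the set of files in the cache of \rentalzappingcachingmeta\ is a subset of the set of files in $\cache{1}$, simply because it is their intersection with $\cache{2}$. Hence the total size of files in \rentalzappingcachingmeta's cache is at most the total size of files in $\cache{1}$.

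Next I would invoke the defining invariant of $\ALGZ$: since $\ALGZ$ is a valid algorithm for caching with zapping on a cache of size $k$, the total size of files it keeps in $\cache{1}$ never exceeds $k$. Chaining the two inequalities gives that the total size of files in \rentalzappingcachingmeta's cache never exceeds $k$, which is exactly the claim. The one subtlety worth a sentence is the handling of zapping/nuking: when $\ALGZ$ zaps a file, \rentalzappingcachingmeta\ zaps it too, so the zapped files sit in the auxiliary infinite "zap cache" and are never counted toward the size-$k$ budget in either algorithm; thus the subset relation $\text{(cache of \rentalzappingcachingmeta)} \subseteq \cache{1}$ continues to refer only to the size-$k$ primary caches, and the argument goes through unchanged.

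There is essentially no obstacle here — the main (and only) point is that taking an intersection can only shrink a set, so the size bound is inherited from $\ALGC$/$\ALGZ$. The proof is a two-line monotonicity argument identical in structure to the one already given for \rentalcachingmeta, and I would simply restate it with "$\ALGC$" replaced by "$\ALGZ$".

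\begin{proof}
	Total size of all items in the cache of $\ALGZ$ is at least the total size of all items in the cache of \rentalzappingcachingmeta, since the latter cache is the intersection of $\cache{1}$ and $\cache{2}$ and hence a subset of $\cache{1}$. (Files zapped by $\ALGZ$, and correspondingly by \rentalzappingcachingmeta, reside in the auxiliary infinite cache and are not counted here.) This proves our claim, because $\ALGZ$ maintains the invariant that the total size of items in its size-$k$ cache is at most $k$.
\end{proof}
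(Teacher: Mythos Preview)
Your proof is correct and essentially identical to the paper's own proof, which is the same two-line monotonicity argument: the cache of \rentalzappingcachingmeta\ is a subset of $\cache{1}$, and $\ALGZ$ keeps the total size of $\cache{1}$ at most $k$. Your added remarks about the intersection and the handling of zapped files are fine clarifications but do not change the approach.
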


\begin{proof}
	Total size of all items in the cache of $\ALGZ$ is at least the total size of all items in the cache of \rentalzappingcachingmeta. This proves our claim, because $\ALGZ$ maintains the invariant that the total size of items in the cache is at most $k$.
\end{proof}

\begin{claim}
	$E[\text{\rentalzappingcachingmeta}] \le E[\ALGSR] + E[\ALGZ]$
\end{claim}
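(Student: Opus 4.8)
The plan is to mirror the charging argument used for the analogous claim about \rentalcachingmeta, now bookkeeping a third cost type. I decompose the cost of \rentalzappingcachingmeta\ into retrieval cost, rental cost, and nuking cost, and bound each component separately.

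For the retrieval cost: the rent-paying cache of \rentalzappingcachingmeta\ is always $\cache{1} \cap \cache{2}$, and in addition it permanently holds (in the nuke cache) every file that $\ALGZ$ has nuked. Hence a file $f$ can leave \rentalzappingcachingmeta's cache only when $\ALGZ$ evicts it from $\cache{1}$, when $\ALGINF$ evicts it from $\cache{2}$, or when $\ALGZ$ nukes it; in the last case \rentalzappingcachingmeta\ never faults on $f$ again. So whenever \rentalzappingcachingmeta\ pays $cost(f)$ to re-fetch $f$, there is a most-recent eviction of $f$ by $\ALGZ$ or by $\ALGINF$ since the previous request to $f$, and I charge the re-fetch to the algorithm that performed that eviction (ties broken arbitrarily). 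This charging is injective: between two consecutive re-fetches of $f$ there is at least one such eviction, and each eviction of $f$ by $\ALGZ$ (resp.\ $\ALGINF$) is itself accounted cost $cost(f)$ in that algorithm's total cost, so no eviction is charged twice. Summing, the retrieval cost of \rentalzappingcachingmeta\ is at most the eviction cost of $\ALGZ$ plus the eviction cost of $\ALGINF$.

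For the rental cost: \rentalzappingcachingmeta's cache is a subset of $\cache{2}$ at every step, so its total rental cost is at most that of $\ALGINF$, which by the construction in the proof of Theorem \ref{thm:rent-inf-cache} equals the renting cost of $\ALGSR$. For the nuking cost: \rentalzappingcachingmeta\ nukes a file exactly when $\ALGZ$ does, so the two nuking costs are equal. Combining the three bounds, and grouping the eviction and renting costs attributed to $\ALGINF$/$\ALGSR$ as $E[\ALGSR]$ (exactly as in the \rentalcachingmeta\ proof) and the eviction and nuking costs of $\ALGZ$ as $E[\ALGZ]$, gives $E[\text{\rentalzappingcachingmeta}] \le E[\ALGSR] + E[\ALGZ]$; since the inequality holds pathwise, taking expectations covers the randomized case.

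The main obstacle is making the retrieval-charging airtight once nuking interacts with the intersection: one must confirm that a nuked file is permanently dropped from the set of files on which \rentalzappingcachingmeta\ can fault (so a nuke event absorbs no retrieval charge and is paid for entirely by the nuking-cost term) and that a file evicted at the same step by both $\ALGZ$ and $\ALGINF$ is charged only once. The rental and nuking bounds, and the reduction to $E[\ALGSR]$ and $E[\ALGZ]$, are routine adaptations of the earlier argument.
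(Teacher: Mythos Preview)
Your proof is correct and follows essentially the same charging argument as the paper: evictions (equivalently, retrievals) are charged to whichever of $\ALGSR$/$\ALGZ$ triggered them, the rental cost is charged to $\ALGSR$, and the zapping cost is charged to $\ALGZ$. You supply considerably more detail than the paper's three-sentence proof (in particular the injectivity of the retrieval-to-eviction charging and the handling of nuked files), but the underlying idea is identical.
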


\begin{proof}
	\rentalzappingcachingmeta\ evicts a file, when at least one of $\ALGSR$ and $\ALGZ$ evicted the file. For each eviction, charge the cost of eviction for \rentalzappingcachingmeta\ to the algorithm that evicted the file, breaking ties arbitrarily. Charge the cost of zapping to $\ALGZ$ and charge the rental cost to the rental cost of $\ALGSR$. This proves our claim.
\end{proof}

Also, $E[\ALGSR] \le \alpha \cdot \OPTSR \le \alpha \cdot \OPT$, and $E[\ALGZ] \le \beta \cdot \OPTZ \le \beta \cdot \OPT$, where $\OPTSR$ denotes the optimal cost for rental caching with infinite cache, $\OPTZ$ denotes the optimal cost for caching with zapping, and $OPT$ denotes the optimal cost for rental caching with zapping. So, $E[\text{\rentalzappingcachingmeta}] \le (\alpha + \beta) \OPT$, and hence, \rentalzappingcachingmeta\ is $(\alpha + \beta)$-competitive algorithm for rental caching. If both $\ALGZ$ and $\ALGSR$ are deterministic, \rentalcachingmeta\ is also deterministic, otherwise it is randomized.

\section{Conclusions and further directions}

We present lower and upper bounds, in deterministic and randomized settings, for rental paging and rental caching. For most cases, the lower and upper bounds are tight up to constant factors. When $\inv{k^2 H_k} < \lambda < \inv{k}$, there is a gap between the randomized lower and upper bounds shown in this paper.
The lower bounds in this paper assume that the cache of OPT is always full, and consequently, in each phase is OPT's rental cost is no longer O(OPT's eviction cost). It may be possible to show better lower bounds using a modified analysis or possibly by using another adversary strategy.

The deterministic lower and upper bounds for paging with zapping are tight up to constant factors. The next step would be to study randomized lower bounds and randomized algorithms for caching with zapping.

For rental caching with zapping, we present the upper bounds in both deterministic and randomized settings. It would be interesting to study the lower bounds and .

The models in this paper assume uniform rental cost and uniform zapping cost in this study. Note that, in our model for rental caching, the total rental cost depends only on the size of a file. A natural extension would be to consider models with (arbitrary) non-uniform rental and zapping costs.

The CILP based approach by \citet{young12greedy} is a general and elegant approach for deriving deterministic algorithms for online paging and caching problems. We use this approach for all the new variants studied in this paper. The algorithms thus derived may not be optimal, as we show for the case of rental paging (or caching) and also for rental paging (or caching) with zapping. For the problems in this paper, we were able to apply simple modifications to achieve upper bounds within constant factors on the lower bounds.

The primal-dual approach in \citep{bansal2007primal, bansal2008randomized, buchbinder2009online, adamaszek2012log, epstein2011variants} is a powerful framework for deriving randomized algorithms for online caching problems. It would be interesting to investigate if their approach can be used to give randomized algorithms for the variants studied in this paper.

\section{Acknowledgments}

We would like to thank Marek Chrobak and Li Yan for useful discussions.

\bibliographystyle{abbrvnat}
\bibliography{references}

\end{document}